\newtheorem{theorem}{Theorem}[section]
\newtheorem{lemma}[theorem]{Lemma}
\newtheorem{proposition}[theorem]{Proposition}
\newtheorem{corollary}[theorem]{Corollary}
\theoremstyle{definition}
\newtheorem{definition}[theorem]{Definition}
\newtheorem{example}[theorem]{Example}
\theoremstyle{remark}
\numberwithin{equation}{section}
\begin{document}
\title[QR and QDC codes over $\mathbb{F}_{2}+u\mathbb{F}_{2}+u^{2}\mathbb{F}%
_{2}$ ]{New extremal binary self-dual codes of length $68$ from quadratic residue codes
over $\mathbb{F}_{2}+u\mathbb{F}_{2}+u^{2}\mathbb{F}_{2}$}
\author{Abidin Kaya}
\author{Bahattin Yildiz}
\author{Irfan Siap}
\address{Department of Mathematics, Fatih University, 34500, Istanbul, Turkey%
}
\email{byildiz@fatih.edu.tr, akaya@fatih.edu.tr}
\address{Department of Mathematics, Yildiz Technical University, 34210,
Istanbul, Turkey}
\email{isiap@yildiz.edu.tr}
\subjclass[2000]{Primary 94B05, 94B99; Secondary 11T71, 13M99}
\keywords{quadratic residue codes, extremal self-dual codes, Gray maps,
quadratic double-circulant codes}

\begin{abstract}
In this work, quadratic reside codes over the ring $\mathbb{F}_{2}+u\mathbb{F%
}_{2}+u^{2}\mathbb{F}_{2}$ with $u^{3}=u$ are considered. A duality and
distance preserving Gray map from $\mathbb{F}_{2}+u\mathbb{F}_{2}+u^{2}%
\mathbb{F}_{2}$ to $\mathbb{F}_2^3$ is defined. By using quadratic double
circulant, quadratic bordered double circulant constructions and their
extensions self-dual codes of different lengths are obtained. As Gray images
of these codes and their extensions, a substantial number of new extremal
self-dual binary codes are found. More precisely, thirty two new extremal
binary self-dual codes of length $68$, 363 Type I codes of parameters $%
[72,36,12]$, a Type II $[72,36,12]$ code and a Type II $[96,48,16]$ code
with new weight enumerators are obtained through these constructions. The
results are tabulated.
\end{abstract}

\maketitle

\section{Introduction}

Quadratic residue codes are a special family of BCH codes, which is a
special subfamily of cyclic codes. They were first introduced by Andrew
Gleason and since then have generated a lot of interest. This is due to the
fact that they enjoy good properties and they are source of good codes such
as binary quadratic residue codes. While being studied over finite fields in
the early works, recently quadratic residue codes have been studied over
some special rings.

First, Pless and Qian studied quaternary quadratic residue codes (over the
ring $\mathbb{Z}_{4}$) and some of their properties in \cite{Pless}. In
2000, Chiu et al. extended the ideas in \cite{Pless} to the ring $\mathbb{Z}%
_{8}$ in \cite{chiu}. Taeri considered quadratic residue codes over the ring
$\mathbb{Z}_{9}$ in \cite{Taeri}. Most recently, the authors studied
quadratic residue codes over the ring $\mathbb{F}_{p}+v\mathbb{F}_{p}$ and
their images in \cite{FpvFp}.

Another interesting and oft-studied class of codes is the class of self-dual
codes. Self-dual codes have connections to many fields of research such as lattices,
designs and invariant theory. The study of extremal self-dual codes has
generated a lot of interest among coding theorists. There are many different
constructions for them. We can direct the reader to see \cite%
{bouyukliev,dontcheva,dontcheva2,dougherty1,feit,gulliver} and the
references therein for a complete literature on self-dual codes.

The connection between quadratic residue codes and self-dual codes was first
explored quite effectively by Pless in seventies in constructing the
extremal doubly-even self dual code of parameters $[48,24,12]$. This code is
still known as the extended quadratic residue code. Gaborit used a quadratic
residue double circulant construction for self-dual codes in \cite{gaborit}.
In \cite{FpvFp}, the authors explored this connection using quadratic
residue codes over the ring $\mathbb{F}_{p}+v\mathbb{F}_{p}$ and constructed
a number of good self-dual codes over different fields.

Our goal in this work is to construct quadratic residue codes over a newly
defined ring $R = \mathbb{F}_2+u\mathbb{F}_2+u^2\mathbb{F}_2$ with $u^3=u$
and to explore new constructions for binary self-dual codes. A duality and
weight-preserving Gray map from the ring to the binary field allows us to
construct many good binary self-dual codes as Gray images of self-dual codes
over $R$.

The rest of the paper is organized as follows. In Section $2$, the structure
of the ring as well as some preliminaries about self-dual codes are given.
Quadratic residue codes and extended quadratic residue codes are defined and
investigated in Section $3$. Some extremal binary self dual codes are
obtained as Gray images. Particularly a Type II $\left[ 96,48,16\right] _{2}$
code with a new weight enumerator appeared in the examples. In Section $4$,
quadratic double circulant (QDC) and bordered QDC codes over $\mathbb{F}%
_{2}+u\mathbb{F}_{2}+u^{2}\mathbb{F}_{2}$ are defined. Families of self dual
codes are obtained. The Gray image of an example turned to be a type II $%
\left[ 72,36,12\right] _{2}$ binary code with a new weight enumerator. Some
extension methods for self dual codes over $R$ are given in Section $5$. As
a result, $363$ new $\left[ 72,36,12\right] _{2}$ Type I codes and $32$ new
extremal binary self-dual codes of parameters $[68,34,12]$ are obtained via
the Gray images of $R$-extensions. Section $6$ concludes the paper.

\section{Preliminaries}

\subsection{The structure of the ring $\mathbb{F}_{2}+u\mathbb{F}_{2}+u^{2}%
\mathbb{F}_{2}$ with $u^{3}=u$}

\bigskip Throughout, we let $R$ denote the commutative ring $\mathbb{F}_{2}+u%
\mathbb{F}_{2}+u^{2}\mathbb{F}_{2}$, constructed via $u^{3}=u$. $R$ is a
characteristic $2$ ring of size $8$. It is a non-local, non-chain principal
ideal ring with the following non-trivial ideals;%
\begin{eqnarray*}
I_{1+u} &=&\left( 1+u\right) =\left\{ 0,1+u,\ u+u^{2},1+u^{2}\right\} , \\
I_{u^{2}} &=&\left( u^{2}\right) =\left\{ 0,u,u^{2},u+u^{2}\right\} , \\
I_{u+u^{2}} &=&\left( u+u^{2}\right) =\left\{ 0,u+u^{2}\right\} , \\
I_{1+u^{2}} &=&\left( 1+u^{2}\right) =\left\{ 0,1+u^{2}\right\} ,
\end{eqnarray*}%
which satisfy $0\subset I_{1+u^{2}},\ I_{u+u^{2}}\subset I_{1+u},\ I_{u^{2}}$
$\subset R$.

The units in $R$ are given by $\left\{ 1,1+u+u^{2}\right\} $ and the square
of a unit is $1$. The non-units are given by $\left\{
0,u,u^{2},u+u^{2},1+u,1+u^{2}\right\} $ and splitted into three groups with
respect to their squares as
\begin{eqnarray*}
u^{2} &=&\left( u^{2}\right) ^{2}=u^{2}, \\
\left( 1+u\right) ^{2} &=&\left( 1+u^{2}\right) ^{2}=1+u^{2}, \\
0^{2} &=&\left( u+u^{2}\right) ^{2}=0.
\end{eqnarray*}%
The ring has primitive idempotents in $u^{2}$ and $1+u^{2}$. Note that the ring
is isomorphic to $\mathbb{F}_{2}\times \left( \mathbb{F}_{2}+u\mathbb{F}%
_{2}\right) $ if we label $u+u^{2}$ as $u$. Every element of $R$ can be
written uniquely in the form $\left( 1+u^{2}\right) a+u^{2}\left( b+c\left(
u+u^{2}\right) \right) $ where $a,b$ and $c\in \mathbb{F}_{2}$.

We introduce the character $\chi $ from the additive group of $R$ to nonzero
complex numbers as $\chi \left( a+bu+cu^{2}\right) =\left( -1\right) ^{c}$.
It is clear that $\chi (\alpha +\beta )=\chi (\alpha ).\chi (\beta )$ for
all $\alpha ,\beta \in R$. $ker(\chi )=\{0,1,u,1+u\}$, which does not
contain any non-trivial ideals of $R$. Thus by \cite{Wood}, we see that $%
\chi $ is a generating character of the ring. Since it has a generating
character, it is a Frobenius ring. In particular this means we have the
following lemma:

\begin{lemma}
\label{dual} Let $C$ be a linear code over $R$ of length $n$. Then $|C|\cdot
|C^{\perp }|=|R|^{n}=8^{n}.$
\end{lemma}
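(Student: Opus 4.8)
The plan is to use the character-theoretic (MacWilliams-style) argument that is standard for Frobenius rings, exploiting the generating character $\chi$ just constructed. For each $y \in R^{n}$ define a character $\chi_{y}$ of the additive group $R^{n}$ by $\chi_{y}(x) = \chi(x \cdot y)$, where $x \cdot y = \sum_{i=1}^{n} x_{i} y_{i}$ is the standard inner product. Since $\chi$ is a generating character, the assignment $a \mapsto (r \mapsto \chi(ar))$ is a group isomorphism from $R$ onto its character group $\widehat{R}$; consequently $y \mapsto \chi_{y}$ is an isomorphism from $R^{n}$ onto $\widehat{R^{n}}$, and in particular $\chi_{y}$ is the trivial character exactly when $y = 0$.

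First I would evaluate, for a fixed $y$, the sum $\sum_{x \in C} \chi_{y}(x)$. Restricting $\chi_{y}$ to the finite abelian group $C$, orthogonality of characters gives that this sum equals $|C|$ when $\chi_{y}|_{C}$ is trivial and $0$ otherwise. The crucial point is to identify when $\chi_{y}|_{C}$ is trivial: the map $x \mapsto x \cdot y$ sends $C$ onto an $R$-submodule, hence an ideal $I$ of $R$, and $\chi_{y}|_{C}$ is trivial precisely when $\chi(I) = \{1\}$, i.e. $I \subseteq \ker\chi$. Because $\ker(\chi) = \{0,1,u,1+u\}$ contains no non-trivial ideal of $R$, this forces $I = 0$, that is $x \cdot y = 0$ for all $x \in C$, which is exactly the condition $y \in C^{\perp}$. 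Hence
\begin{equation*}
\sum_{x \in C} \chi_{y}(x) = \begin{cases} |C| & \text{if } y \in C^{\perp}, \\ 0 & \text{if } y \notin C^{\perp}. \end{cases}
\end{equation*}

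Summing this identity over all $y \in R^{n}$ yields $\sum_{y \in R^{n}} \sum_{x \in C} \chi_{y}(x) = |C| \cdot |C^{\perp}|$. I would then interchange the order of summation and compute the inner sum for fixed $x$: since $\chi(x \cdot y) = \prod_{i=1}^{n} \chi(x_{i} y_{i})$, we get $\sum_{y \in R^{n}} \chi_{y}(x) = \prod_{i=1}^{n} \bigl( \sum_{y_{i} \in R} \chi(x_{i} y_{i}) \bigr)$. For $x_{i} = 0$ the $i$-th factor equals $|R| = 8$, while for $x_{i} \neq 0$ the character $r \mapsto \chi(x_{i} r)$ is non-trivial by the isomorphism above, so that factor vanishes by orthogonality. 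Thus the product is $8^{n}$ when $x = 0$ and $0$ otherwise, and the double sum collapses to the single term $x = 0$, giving $|R|^{n} = 8^{n}$. Equating the two evaluations of the double sum gives $|C| \cdot |C^{\perp}| = 8^{n}$, as claimed.

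The only genuinely delicate step is the identification of $y \in C^{\perp}$ with the triviality of $\chi_{y}|_{C}$; everything else is routine orthogonality and a change of summation order. This step is exactly where the generating property of $\chi$ — concretely, that $\ker(\chi)$ contains no non-zero ideal of $R$ — is indispensable, and it is the reason the full Frobenius structure of $R$ established above is needed rather than merely its cardinality.
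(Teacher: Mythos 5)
Your proof is correct and complete, but it is not the paper's route: the paper gives no argument for this lemma at all. It merely observes that $\chi(a+bu+cu^2)=(-1)^c$ is a generating character because $\ker(\chi)=\{0,1,u,1+u\}$ contains no non-trivial ideal, concludes via the citation to Wood that $R$ is a Frobenius ring, and then states the lemma as a known consequence of that fact. What you have done is unpack that black box: your character-sum computation --- orthogonality of $\chi_y$ restricted to $C$, the identification of the triviality of $\chi_y|_C$ with $y\in C^{\perp}$ (using that $\{x\cdot y : x\in C\}$ is an ideal, hence forced into $\{0\}$ by the smallness of $\ker\chi$), and the interchange of the order of summation --- is precisely the standard proof of the duality property of finite Frobenius rings, specialized to $R$. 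Both routes therefore hinge on the same single fact, namely the existence of a character whose kernel contains no non-zero ideal; yours buys self-containedness at the cost of length, while the paper's buys brevity at the cost of an external appeal. The only step you assert rather than justify is that the image of $C$ under $x\mapsto x\cdot y$ is an ideal of $R$; this is immediate since $C$ is an $R$-submodule and $R$ is commutative, so the map is $R$-linear, and your conclusion stands.
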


\subsection{Linear codes over $R$}

A linear code $C$ of length $n$ over $R$ is an $R$-submodule of $R^{n}$. An
element of the code $C$ is called a codeword of $C$. A generator matrix of $%
C $ is a matrix whose rows generate $C$. The Hamming
weight of a codeword is the number of non-zero components.

Let $x=\left( x_{1},x_{2},\ldots ,x_{n}\right) $ and $y=\left(
y_{1},y_{2},\ldots ,y_{n}\right) $ be two elements of $R^{n}$. The Euclidean
inner product is given as $\left\langle x,y\right\rangle _{E}=\sum
x_{i}y_{i} $. The dual code of $C$ with respect to the Euclidean inner
product is denoted by $C^{\bot }$ and defined as
\begin{equation*}
C^{\bot }=\left\{ x\in R^{n}\mid \left\langle x,y\right\rangle _{E}=0\text{
for all }y\in C\right\}
\end{equation*}%
We say that $C$ is self-dual if $C=C^{\bot }$.

Two linear codes are said to be permutation equivalent if one can be
obtained from the other by a permutation of coordinates. A code is said to
be iso-dual if it is permutation equivalent to its dual code.

In the sequel we let $R_{n}:=R\left[ x\right] /\left( x^{n}-1\right) $. A
polynomial $f\left( x\right) $ is abbreviated as $f$ if there is no
confusion.

The extended code of a code $C$ over $R$ will be denoted by $\overline{C}$,
which is the code obtained by adding a specific column to the generator
matrix of $C$.

Let $p$ be an odd prime such that $p\equiv \pm 1\pmod{8}$ and let $Q_{p}$
and $N_{p}$ be the sets of quadratic residues and non-residues modulo $p$,
respectively. We use the notations $e_{1}(x)=\sum\limits_{i\in Q_{p}}x^{i}$,
$e_{2}(x)=\sum\limits_{i\in N_{p}}x^{i}$ and $h$ denotes the polynomial
corresponding to the all one vector of length $p$, i.e. $h=1+e_{1}+e_{2}$.

Let $a\in \mathbb{F}_{p}^{\ast }$, the map $\mu _{a}:\mathbb{F}%
_{p}\rightarrow \mathbb{F}_{p}$ is defined by $\mu _{a}\left( i\right) =ai%
\pmod{p}$ and it acts on polynomials as
\begin{equation*}
\mu _{a}\left( \sum_{i}x^{i}\right) =\sum_{i}x^{\mu _{a}(i)}.
\end{equation*}%
It is easily observed that $\mu _{a}\left( fg\right) =\mu _{a}\left(
f\right) \mu _{a}\left( g\right) $ for polynomials $f$ and $g$ in $R_{p}$.

Let $S$ be a commutative ring with identity, then;

\begin{theorem}
\cite{Huffman}\cite{Taeri} Let $C_{1}$ and $C_{2}$ be cyclic codes of length
$n$ over $S$ generated by the idempotents $a,b$ in $S\left[ x\right] /\left(
x^{n}-1\right) $. Then $C_{1}\cap C_{2}$ and $C_{1}+C_{2}$ are generated by
the idempotents $ab$ and $a+b-ab$, respectively.
\end{theorem}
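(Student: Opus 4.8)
The plan is to exploit the single defining feature of an idempotent generator: if $e$ is an idempotent of the commutative ring $S[x]/(x^{n}-1)$, then for any element $v$ one has $v\in (e)$ if and only if $ev=v$. Indeed, if $v=er$ then $ev=e^{2}r=er=v$, and conversely the equation $v=ev$ already exhibits $v$ as a multiple of $e$. Before using this I would first check that $ab$ and $a+b-ab$ really are idempotents, so that the phrase ``generated by the idempotent'' is justified. Since $S$ is commutative the ring $S[x]/(x^{n}-1)$ is commutative, so $ab=ba$ and hence $(ab)^{2}=a^{2}b^{2}=ab$; a direct expansion using $a^{2}=a$, $b^{2}=b$ and $ab=ba$ shows that the cross terms in $(a+b-ab)^{2}$ collapse to give $a+b-ab$ again.

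For the intersection I would prove $(ab)=C_{1}\cap C_{2}$ by double inclusion. Since $ab=a\cdot b$ is a multiple of $a$ and, by commutativity, also a multiple of $b$, the element $ab$ lies in both $C_{1}=(a)$ and $C_{2}=(b)$, giving $(ab)\subseteq C_{1}\cap C_{2}$. Conversely, if $v\in C_{1}\cap C_{2}$ then the absorption property applied to the idempotents $a$ and $b$ yields $av=v$ and $bv=v$; therefore $(ab)v=a(bv)=av=v$, so $v\in (ab)$. This gives the reverse inclusion and hence the first assertion.

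For the sum I would again argue by double inclusion with $e:=a+b-ab$. Writing $e=a+b(1-a)$ exhibits $e$ as the sum of a multiple of $a$ and a multiple of $b$, so $e\in C_{1}+C_{2}$ and thus $(e)\subseteq C_{1}+C_{2}$. For the reverse inclusion it suffices to check that $a$ and $b$ both lie in $(e)$, and by the absorption property this amounts to the identities $ea=a$ and $eb=b$. Expanding, $ea=a^{2}+ab-a^{2}b=a+ab-ab=a$ and similarly $eb=ab+b^{2}-ab^{2}=b$, using idempotence and commutativity. Hence $C_{1}=(a)$ and $C_{2}=(b)$ are both contained in the ideal $(e)$, so $C_{1}+C_{2}\subseteq (e)$, completing the proof.

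The computations are all routine, so there is no serious obstacle; the one point deserving care is that $S$ is an \emph{arbitrary} commutative ring rather than a field, so no dimension or counting argument is available and every inclusion must be obtained purely ideal-theoretically through the absorption identity $v=ev$. Commutativity is used essentially throughout—both to make $ab$ idempotent and to force the cross terms in the expansions to cancel—so I would flag its use at each step.
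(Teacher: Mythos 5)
Your proof is correct and complete. The paper does not actually prove this statement---it is quoted from \cite{Huffman} and \cite{Taeri} as a known fact---so there is no in-paper argument to compare against; your ideal-theoretic argument via the absorption identity $v\in(e)\iff ev=v$, together with the verifications $(ab)^{2}=ab$, $(a+b-ab)^{2}=a+b-ab$, $ea=a$ and $eb=b$, is exactly the standard proof found in the cited sources, and your remark that commutativity (rather than any field or counting argument) is what drives every step is the right thing to flag.
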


\begin{theorem}
\cite{Huffman}\cite{Taeri} Let $C$ be a cyclic code over $S$ generated by
idempotent $e\left( x\right) $. Then its dual $C^{\perp }$ is generated by
the idempotent $1-e\left( x^{-1}\right) $.
\end{theorem}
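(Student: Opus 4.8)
The plan is to reduce the orthogonality condition to a purely multiplicative one inside $S_{n}:=S[x]/(x^{n}-1)$ and then to finish with elementary idempotent algebra. The key observation is the standard dictionary between the Euclidean inner product and polynomial multiplication: writing $a(x)=\sum_{i}a_{i}x^{i}$ and $b(x)=\sum_{j}b_{j}x^{j}$, a direct expansion (reducing exponents modulo $n$, so that $x^{-1}=x^{n-1}$) shows that the coefficient of $x^{k}$ in the product $a(x)\,b(x^{-1})$ equals $\langle a,\,x^{k}b\rangle_{E}$, the inner product of $a$ with the $k$-fold cyclic shift of $b$.

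First I would use this dictionary to establish the criterion
\[
a\in C^{\perp}\iff a(x)\,b(x^{-1})\equiv 0\pmod{x^{n}-1}\ \text{ for all }b\in C.
\]
The backward direction is immediate by reading off the constant coefficient, which is exactly $\langle a,b\rangle_{E}$. For the forward direction I would invoke that $C$ is cyclic: if $a\in C^{\perp}$, then $a$ is orthogonal to every $x^{k}b$ (each of which again lies in $C$), so every coefficient of $a(x)\,b(x^{-1})$ vanishes, giving $a(x)\,b(x^{-1})\equiv 0$.

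Next I would exploit that $C$ is generated by the idempotent $e$, so $C=e\,S_{n}$. The substitution $\sigma\colon f(x)\mapsto f(x^{-1})$ is a ring automorphism of $S_{n}$ of order two, whence $\{\,b(x^{-1}):b\in C\,\}=\sigma(e\,S_{n})=\hat{e}\,S_{n}$, where $\hat{e}(x):=e(x^{-1})$; moreover $\hat{e}$ is again idempotent because $\hat{e}^{2}=(e^{2})(x^{-1})=\hat{e}$. The criterion then reads: $a\in C^{\perp}$ iff $a\,w=0$ for all $w\in\hat{e}\,S_{n}$, which (take $w=\hat{e}$ for one direction, and multiply through for the other) is equivalent to the single equation $a\,\hat{e}=0$.

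Finally I would identify the annihilator $\{a:a\,\hat{e}=0\}$ with the ideal generated by $1-\hat{e}$. The inclusion $(1-\hat{e})S_{n}\subseteq\{a:a\,\hat{e}=0\}$ follows from $(1-\hat{e})\hat{e}=\hat{e}-\hat{e}^{2}=0$, and the reverse inclusion from writing $a=a\,\hat{e}+a(1-\hat{e})=a(1-\hat{e})$ whenever $a\,\hat{e}=0$. Since $1-\hat{e}=1-e(x^{-1})$ is itself idempotent, this yields $C^{\perp}=(1-e(x^{-1}))$, generated by the claimed idempotent. I expect the main obstacle to be setting up the inner-product/multiplication dictionary carefully, keeping the index bookkeeping and the identification $x^{-1}=x^{n-1}$ correct so that cyclic shifts correspond exactly to the coefficients of $a(x)\,b(x^{-1})$; once that correspondence is secured, the automorphism property of $\sigma$ and the idempotent manipulations are routine.
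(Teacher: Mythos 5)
This theorem is stated in the paper without proof, as a citation to \cite{Huffman} and \cite{Taeri}, so there is no internal argument to compare against. Your proof is correct and is essentially the standard one from those references: the cyclic-shift/multiplication dictionary reduces membership in $C^{\perp}$ to the single equation $a\,e(x^{-1})=0$, whose solution set is the ideal generated by the idempotent $1-e(x^{-1})$.
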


It is well-known that cyclic codes over $R$ correspond to ideals in $R_n =
R[x]/(x^n-1)$. Thus it is essential to understand the structure of the ring $%
R_n$. We observe that every element in $R_{n}$ can be written uniquely in
the form $\left( 1+u^{2}\right) f+u^{2}\left( g+h\left( u+u^{2}\right)
\right) $ where $f,g$ and $h\in \mathbb{F}_{2}\left[ x\right] /\left(
x^{n}-1\right) $. An important tool in studying the ring $R_n$ is to
consider the idempotents. We first show that the idempotents in $R_{n}$ are
characterized as follows:

\begin{lemma}
$\left( 1+u^{2}\right) f+u^{2}\left( g+h\left( u+u^{2}\right) \right) $ is
an idempotent in $R_{n}$ if and only if $f$ and $g$ are idempotents in $%
\mathbb{F}_{2}\left[ x\right] /\left( x^{n}-1\right) $ and $h$ is the zero
polynomial.
\end{lemma}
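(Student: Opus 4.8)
The plan is to exploit the complete set of orthogonal idempotents already exhibited in the preliminaries, namely $e_{1}=1+u^{2}$ and $e_{2}=u^{2}$, together with the nilpotent element $w=u+u^{2}$, and then to reduce the whole question to a comparison of coefficients justified by the stated uniqueness of the representation $\left(1+u^{2}\right)f+u^{2}\left(g+h\left(u+u^{2}\right)\right)$. Since $u^{2}\left(u+u^{2}\right)=u+u^{2}$, this representation can be rewritten as $\alpha=e_{1}f+e_{2}g+wh$, so the lemma is exactly the assertion that $\alpha^{2}=\alpha$ holds if and only if $f^{2}=f$, $g^{2}=g$ and $h=0$. I would prove the equivalence in one pass, since every step below is reversible.

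First I would record the multiplication relations among $e_{1},e_{2},w$. A short computation using $u^{3}=u$ (hence $u^{4}=u^{2}$) and characteristic $2$ gives $e_{1}^{2}=e_{1}$, $e_{2}^{2}=e_{2}$, $e_{1}e_{2}=0$, $e_{2}w=w$, $e_{1}w=0$, and, crucially, $w^{2}=\left(u+u^{2}\right)^{2}=u^{2}+u^{4}=0$. Because $f,g,h$ lie in $\mathbb{F}_{2}\left[x\right]/\left(x^{n}-1\right)$ and therefore commute with $u$, these identities hold verbatim inside $R_{n}$.

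Next I would expand $\alpha^{2}=\left(e_{1}f+e_{2}g+wh\right)^{2}$. Working in characteristic $2$, every doubled cross term $2e_{1}e_{2}fg$, $2e_{1}wfh$, $2e_{2}wgh$ vanishes, leaving only the three squares, so
\[
\alpha^{2}=\left(e_{1}f\right)^{2}+\left(e_{2}g\right)^{2}+\left(wh\right)^{2}=e_{1}f^{2}+e_{2}g^{2}+w^{2}h^{2}=\left(1+u^{2}\right)f^{2}+u^{2}g^{2},
\]
where the last term disappears because $w^{2}=0$. Finally I would compare this with $\alpha=\left(1+u^{2}\right)f+u^{2}g+\left(u+u^{2}\right)h$ coordinatewise. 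By the uniqueness of the representation, the equation $\alpha^{2}=\alpha$ is equivalent to the simultaneous system $f^{2}=f$, $g^{2}=g$ and $0=h$, which says precisely that $f$ and $g$ are idempotents in $\mathbb{F}_{2}\left[x\right]/\left(x^{n}-1\right)$ and $h$ is the zero polynomial.

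There is no genuine obstacle here; the argument is essentially a bookkeeping computation. The only points that require care are verifying $w^{2}=0$ and $e_{1}w=0$ correctly (so that no $w$-term survives in $\alpha^{2}$) and invoking the uniqueness of the representation so that matching coefficients is legitimate — without uniqueness, the coordinatewise comparison would not be valid.
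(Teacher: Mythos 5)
Your proof is correct and follows essentially the same route as the paper's: both hinge on the computation $\left(u+u^{2}\right)^{2}=0$ so that the $h$-term dies upon squaring, followed by a comparison of coefficients via the unique representation $\left(1+u^{2}\right)f+u^{2}\left(g+h\left(u+u^{2}\right)\right)$. If anything, your version is slightly more complete, since you explicitly note that the coefficient comparison forces $h=0$, a point the paper's proof leaves implicit.
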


\begin{proof}
Let $\left( 1+u^{2}\right) f+u^{2}\left( g+h\left( u+u^{2}\right) \right) $
be an idempotent in $R_{n}$ then,%
\begin{eqnarray*}
\left[ \left( 1+u^{2}\right) f+u^{2}\left( g+h\left( u+u^{2}\right) \right) %
\right] ^{2} &=&\left( 1+u^{2}\right) f^{2}+u^{2}g^{2}\text{ since }\left(
u+u^{2}\right) ^{2}=0 \\
&=&\left( 1+u^{2}\right) f+u^{2}\left( g+h\left( u+u^{2}\right) \right)
\end{eqnarray*}%
implies $f^{2}=f$ and $g^{2}=g$.

Conversely, if $f$ and $g$ are idempotents then so is $\left( 1+u^{2}\right)
f+u^{2}g$.
\end{proof}

We define the following linear Gray map which takes a linear code over $R$
of length $n$ to a binary\ linear code of length $3n$.

\begin{definition}
\label{gray}Let $\varphi :R^{n}\rightarrow \mathbb{F}_{2}^{3n}$ be the map
given by%
\begin{equation*}
\varphi \left( \overline{a}+\overline{b}u+\overline{c}u^{2}\right) =\left(
\overline{a}+\overline{b},\overline{b}+\overline{c},\overline{c}\right),
\end{equation*}%
and define the Lee weight of an element of $R$ as $w_{L}\left(
a+bu+cu^{2}\right) =w_{H}\left( a+b,b+c,c\right) $ where $w_{H}$ denotes the
usual Hamming weight.
\end{definition}

\begin{proposition}
\label{duality}The Gray image of a self-dual code of length $n$ over $R$ is
a binary self-dual code of length $3n$.
\end{proposition}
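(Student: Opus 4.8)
The plan is to show two things: first that the Gray map $\varphi$ preserves the relevant inner-product structure so that duals are sent to duals, and second that $\varphi$ is a bijection on codes (in fact an injective $\mathbb{F}_2$-linear map) so that sizes are controlled. The key identity that makes everything work is a duality-preservation statement at the level of a single pair of coordinates. I would first verify that for $\alpha, \beta \in R$, the binary inner product $\langle \varphi(\alpha), \varphi(\beta)\rangle$ relates correctly to $\langle \alpha, \beta\rangle_E$ through the character $\chi$; concretely, writing $\alpha = a_1 + b_1 u + c_1 u^2$ and $\beta = a_2 + b_2 u + c_2 u^2$, I would compute $\langle \varphi(\alpha),\varphi(\beta)\rangle = (a_1+b_1)(a_2+b_2) + (b_1+c_1)(b_2+c_2) + c_1 c_2$ over $\mathbb{F}_2$ and compare it with the coefficient of $u^2$ (equivalently, the image under $\chi$) of the product $\alpha\beta$ in $R$.

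Next I would extend this coordinatewise identity to full codewords: summing over all $n$ coordinates, for $x,y \in R^n$ one gets that $\langle \varphi(x),\varphi(y)\rangle = 0$ in $\mathbb{F}_2$ if and only if the $u^2$-coefficient of $\langle x,y\rangle_E$ vanishes. Since we want orthogonality over $R$ to imply orthogonality over $\mathbb{F}_2$, I would argue that if $x,y$ are Euclidean-orthogonal over $R$ then $\langle x,y\rangle_E = 0$, whence in particular its $u^2$-component is zero, giving $\langle \varphi(x),\varphi(y)\rangle = 0$. Applying this to $x,y$ ranging over a self-dual code $C$ shows $\varphi(C) \subseteq \varphi(C)^{\perp}$, so $\varphi(C)$ is self-orthogonal.

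To upgrade self-orthogonality to self-duality I would use a counting argument based on Lemma~\ref{dual}. Since $\varphi$ is $\mathbb{F}_2$-linear and injective (its definition as a change of coordinates on $R \cong \mathbb{F}_2^3$ is clearly invertible), we have $|\varphi(C)| = |C|$. For a self-dual code $C$ over $R$, Lemma~\ref{dual} gives $|C|^2 = 8^n$, so $|C| = 8^{n/2} = 2^{3n/2}$. A binary self-orthogonal code of length $3n$ whose size equals $2^{3n/2}$ is exactly half-dimensional, hence self-dual; thus $\varphi(C) = \varphi(C)^{\perp}$, completing the argument.

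The main obstacle I anticipate is the single-coordinate inner-product computation and making precise the claim that $\varphi$ preserves orthogonality. One must check carefully that the specific form of the Gray map, namely $\varphi(a+bu+cu^2) = (a+b,\,b+c,\,c)$, is exactly the one under which the binary dot product recovers the $u^2$-coefficient of the ring product; a different-looking but superficially plausible Gray map would fail this. Once that identity is pinned down, the rest is routine linear algebra and counting, so I would spend most of the care verifying the algebraic identity $\langle \varphi(\alpha),\varphi(\beta)\rangle \equiv [\alpha\beta]_{u^2} \pmod 2$ and confirming that Euclidean orthogonality over $R$ forces the full product $\langle x,y\rangle_E$ to be zero rather than merely its $u^2$-part.
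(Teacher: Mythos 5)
Your overall route is the same as the paper's: show that $\varphi$ carries Euclidean-orthogonal pairs over $R$ to orthogonal pairs over $\mathbb{F}_2$, deduce $\varphi(C)\subseteq\varphi(C)^{\perp}$, and then upgrade to equality by counting via Lemma~\ref{dual} and the injectivity of $\varphi$. The counting step is fine and matches the paper in substance. However, the ``key identity'' you single out as the crux --- that $\langle \varphi(\alpha),\varphi(\beta)\rangle$ equals the $u^{2}$-coefficient of $\alpha\beta$ (equivalently its image under $\chi$) --- is false, and so is the resulting ``if and only if'' for vectors. Take $\alpha=\beta=1$: then $\alpha\beta=1$ has $u^{2}$-coefficient $0$, but $\varphi(1)=(1,0,0)$ gives $\langle\varphi(1),\varphi(1)\rangle=1$. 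The correct computation, writing $\alpha=a+bu+cu^{2}$ and $\beta=d+eu+fu^{2}$ with $u^{3}=u$, is
\begin{equation*}
\alpha\beta = ad+(ae+bd+bf+ce)u+(af+be+cd+cf)u^{2},\qquad
\langle\varphi(\alpha),\varphi(\beta)\rangle = ad+ae+bd+bf+ce,
\end{equation*}
so the binary dot product recovers the sum of the \emph{constant} and $u$-coefficients of $\alpha\beta$, not the $u^{2}$-coefficient. The character $\chi$ plays no role here; it is only used in the paper to establish that $R$ is Frobenius (which feeds into the counting step).

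The good news is that your argument is repairable with no structural change: since Euclidean orthogonality over $R$ means $\langle x,y\rangle_{E}=0$, all three coefficients of the ring-valued inner product vanish, and in particular the combination (constant plus $u$-coefficient) that actually equals $\langle\varphi(x),\varphi(y)\rangle$ vanishes. This is exactly what the paper's proof does, reading off the relevant components from $\langle x,y\rangle_{E}=0$ and matching them against the expanded binary inner product. But as written, the identity you propose to ``spend most of the care verifying'' would fail the verification, and the biconditional built on it is wrong; you need to replace it with the correct coefficient combination before the first half of the proof stands.
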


\begin{proof}
We first show that the Gray images of orthogonal vectors in $R$ are
orthogonal in $\mathbb{F}_2$. Let $\overline{a}+\overline{b}u+\overline{c}%
u^{2}$ and $\overline{d}+\overline{e}u+\overline{f}u^{2}$ where $\overline{a}%
,\overline{b},\overline{c},\overline{d},\overline{e}$ and $\overline{f}\in
\mathbb{F}_{2}^{n}$ be two codewords of length $n$ over $R$ such that they
are orthogonal. Then
\begin{equation*}
\left \langle \overline{a}+\overline{b}u+\overline{c}u^{2}, \overline{d }+%
\overline{e}u+\overline{f}u^{2}\right \rangle = 0,
\end{equation*}
and so we get
\begin{equation*}
\overline{a}\overline{d}+\left(\overline{b}\overline{d}+\overline{a}%
\overline{e}+\overline{c}\overline{e}+\overline{b}\overline{f} \right)u+
\left( \overline{c}\overline{d}+\overline{b}\overline{e}+\overline{a}%
\overline{f}+\overline{c}\overline{f}\right)u^2 = 0.
\end{equation*}
This implies
\begin{equation}  \label{inpr}
\overline{a}\overline{d} = \overline{b}\overline{d}+\overline{a}\overline{e}+%
\overline{c}\overline{e}+\overline{b}\overline{f} = \overline{c}\overline{d}+%
\overline{b}\overline{e}+\overline{a}\overline{f}+\overline{c}\overline{f} =
0.
\end{equation}

Now, consider the inner product of the Gray images;
\begin{eqnarray*}
\left( \overline{a}+\overline{b},\overline{b}+\overline{c},\overline{c}%
\right)\cdot\left( \overline{d}+\overline{e },\overline{e}+\overline{f},%
\overline{f}\right) &=&\overline{a}\overline{d}+\overline{a}\overline{e}+%
\overline{b}\overline{d}+\overline{b}\overline{e}+\overline{b}\overline{e}+%
\overline{b}\overline{f}+ \overline{c}\overline{e}+\overline{c}\overline{f}+%
\overline{c}\overline{f} \\
&=&\overline{a}\overline{d}+\overline{a}\overline{e}+\overline{b}\overline{d}%
+\overline{b}\overline{f}+\overline{c}\overline{e}=0\text{,}
\end{eqnarray*}%
by (\ref{inpr}). This shows that
\begin{equation}  \label{subset}
\varphi(C^{\perp}) \subset \varphi(C)^{\perp}.
\end{equation}
But since $\varphi$ is an injective isometry, we have $|\varphi(C)| = |C|$.
Both $\mathbb{F}_2$ and $R$ are Frobenius, so we have
\begin{equation*}
|\varphi(C^{\perp})| = |C^{\perp}| = \frac{2^{3n}}{|C|} = \frac{8^n}{%
|\varphi(C)|} = |\varphi(C)^{\perp}|.
\end{equation*}
Combining this with (\ref{subset}), we conclude that $\varphi(C^{\perp})=%
\varphi(C)^{\perp}$. In particular this implies that the Gray images of
self-dual codes over $R$ are self-dual binary codes.
\end{proof}

A self-dual code over $R$ is said to be of Type II if the Lee weights of all
codewords are divisible by $4$, otherwise it is said to be of Type I. The
following corollary is an important consequence of Proposition \ref{duality}
and the definition of the Gray map:

\begin{corollary}
Suppose that $C$ is a self-dual code over $R$ of length $2n$ and minimum Lee
distance $d$. Then $\varphi(C)$ is a binary self-dual code of parameters $%
[6n,3n,d]$, and moreover $C$ and $\varphi(C)$ have the same weight
enumerators. In particular if $C$ is Type II(Type I), then so is $\varphi(C)$%
.
\end{corollary}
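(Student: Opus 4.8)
The plan is to package Proposition~\ref{duality} together with the isometry property of the Gray map. The single fact doing most of the work is that, by the very definition of the Lee weight, $\varphi$ satisfies $w_{L}(\alpha)=w_{H}(\varphi(\alpha))$ on each coordinate $\alpha\in R$; since both weights are additive over coordinates, this upgrades to $w_{L}(\mathbf{x})=w_{H}(\varphi(\mathbf{x}))$ for every $\mathbf{x}\in R^{2n}$. Together with the linearity and injectivity of $\varphi$ (the latter recorded in the proof of Proposition~\ref{duality}), this means $\varphi$ restricts to a weight-preserving bijection of $C$ onto $\varphi(C)$.

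First I would invoke Proposition~\ref{duality} for the self-dual code $C$ of length $2n$, which immediately gives that $\varphi(C)$ is a binary self-dual code of length $3\cdot 2n=6n$. A binary self-dual code has dimension half its length, so $\varphi(C)$ has dimension $3n$; equivalently $|\varphi(C)|=|C|=8^{n}=2^{3n}$. For the minimum distance, the coordinate-wise identity above shows that a nonzero codeword of $C$ of minimal Lee weight $d$ maps to a nonzero word of $\varphi(C)$ of Hamming weight $d$, and vice versa, so the minimum Hamming distance of $\varphi(C)$ is exactly $d$. This yields the parameters $[6n,3n,d]$.

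The same weight-preserving bijection shows that $C$ and $\varphi(C)$ have identical weight distributions, hence identical weight enumerators. Finally, $C$ is Type~II exactly when every codeword has Lee weight divisible by $4$, which by $w_{L}=w_{H}\circ\varphi$ is equivalent to every word of $\varphi(C)$ having Hamming weight divisible by $4$, i.e.\ to $\varphi(C)$ being doubly-even; otherwise both are Type~I.

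Since the corollary is really a repackaging of Proposition~\ref{duality} with the isometry of $\varphi$, there is no serious obstacle. The only step requiring a little care is the passage from the single-coordinate identity $w_{L}(\alpha)=w_{H}(\varphi(\alpha))$ to its summed form over all $2n$ coordinates, on which both the minimum-distance claim and the divisibility claim rest.
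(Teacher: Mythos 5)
Your proof is correct and follows exactly the route the paper intends: the paper states this corollary without proof as an immediate consequence of Proposition~\ref{duality} together with the coordinatewise identity $w_{L}(\alpha)=w_{H}(\varphi(\alpha))$ built into Definition~\ref{gray}, which is precisely what you spell out. Your added care about extending the single-coordinate identity additively to vectors, and about injectivity giving a weight-preserving bijection, fills in the details the authors left implicit.
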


For binary self-dual codes we have the following upper bounds on the minimum
distances:

\begin{theorem}
(\cite{Rains}) Let $d_I(n)$ and $d_{II}(n)$ be the minimum distance of a
Type I and Type II binary code of length $n$. then
\begin{equation*}
d_{II}(n) \leq 4 \lfloor\frac{n}{24}\rfloor+4
\end{equation*}
and
\begin{equation*}
d_{I}(n) \leq \left \{
\begin{array}{ll}
4 \lfloor\frac{n}{24}\rfloor+4 & \text{if $n \not \equiv 22 \pmod{24}$} \\
4 \lfloor\frac{n}{24}\rfloor+6 & \text{if $n \equiv 22 \pmod{24}$.}%
\end{array}%
\right.
\end{equation*}
\end{theorem}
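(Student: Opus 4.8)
The plan is to translate the distance bound into a statement about the Hamming weight enumerator $W_C(x,y)=\sum_{i=0}^{n}A_i\,x^{\,n-i}y^{\,i}$ and to attack it with invariant theory reinforced by the shadow. First I would record that self-duality, via the MacWilliams identity, makes $W_C$ invariant under the substitution $(x,y)\mapsto\bigl(\tfrac{x+y}{\sqrt2},\tfrac{x-y}{\sqrt2}\bigr)$, while all weights being even adds invariance under $y\mapsto -y$, and in the Type II case divisibility by $4$ adds invariance under $y\mapsto iy$. Gleason's theorem identifies the resulting ring of invariants: in the Type II case it is generated by $\phi_8=x^8+14x^4y^4+y^8$ and $\phi_{24}=x^4y^4(x^4-y^4)^4$, and in the Type I case by $\theta_2=x^2+y^2$ and $\theta_8=x^2y^2(x^2-y^2)^2$. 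Consequently $W_C$ is a polynomial in these generators with only $\lfloor n/24\rfloor+1$ (respectively about $n/8$) undetermined coefficients.

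Second, for Type II I would impose extremality. Writing
\[
W_C=\sum_{j=0}^{\lfloor n/24\rfloor} a_j\,\phi_8^{\,n/8-3j}\,\phi_{24}^{\,j},
\]
the conditions $A_0=1$ and $A_4=A_8=\cdots=A_{4\lfloor n/24\rfloor}=0$ form a triangular linear system that pins down all the $a_j$ uniquely, producing a candidate extremal enumerator. The Mallows--Sloane sign argument then shows that demanding the minimum distance exceed $4\lfloor n/24\rfloor+4$ forces the next coefficient $A_{4\lfloor n/24\rfloor+4}$ of this uniquely determined polynomial to be negative, contradicting $A_i\ge 0$. This proves $d_{II}(n)\le 4\lfloor n/24\rfloor+4$. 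The same style of argument applied directly to the Type I invariant ring yields only the weaker estimate $d_I(n)\le 2\lfloor n/8\rfloor+2$, so the sharp Type I statement demands an additional ingredient.

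Third, I would bring in the shadow. Let $C_0$ be the doubly-even subcode of the Type I code $C$; it has index $2$ in $C$, and the shadow is $S=C_0^{\perp}\setminus C$. A MacWilliams-type computation gives $W_S(x,y)=W_C\bigl(\tfrac{x+y}{\sqrt2},\,\tfrac{i(x-y)}{\sqrt2}\bigr)$, where $i^2=-1$, and two rigid constraints fall out: $W_S$ has nonnegative integer coefficients, and every weight occurring in $S$ is congruent to $n/2\pmod 4$. I would substitute the uniquely determined extremal Type I enumerator into this transform and examine the lowest-order coefficients of $W_S$ as a function of the residue $n\bmod 24$.

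The hard part is exactly this last analysis. Assuming $d_I(n)=4\lfloor n/24\rfloor+6$ forces the extremal $W_C$, hence a completely explicit $W_S$; one must then check, uniformly in $n$, whether the combined constraints --- nonnegativity of the extremal $A_i$ together with nonnegativity and the $\equiv n/2\pmod 4$ congruence for the shadow coefficients --- can be satisfied. Rains' calculation shows that they cannot, except precisely when $n\equiv 22\pmod{24}$, where the extremal code genuinely survives; in every other residue class a shadow coefficient is forced negative, so the bound drops to $4\lfloor n/24\rfloor+4$. Carrying out this residue-by-residue bookkeeping on the shadow coefficients is the genuine obstacle, the invariant-theoretic setup and the extremality linear algebra being routine by comparison.
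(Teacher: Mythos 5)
The paper does not actually prove this statement: it is imported verbatim from \cite{Rains} as a known bound, so there is no internal proof to compare against, and your proposal must be judged as a free-standing reconstruction of Rains' argument. Your Type II half is the classical Mallows--Sloane argument and is sound in outline, with one small quibble: the standard contradiction is that the uniquely determined extremal enumerator has $A_{4\lfloor n/24\rfloor+4}$ strictly \emph{positive} (via the B\"{u}rmann--Lagrange coefficient formula), so it cannot also vanish as a larger minimum distance would require; your ``forced negative'' phrasing describes a variant bookkeeping, but either can be made to work.

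The genuine gap is in the Type I half, and it is exactly where you wave at ``Rains' calculation.'' Your claim that assuming $d_I(n)=4\lfloor n/24\rfloor+6$ ``forces the extremal $W_C$, hence a completely explicit $W_S$'' contradicts your own parameter count one step earlier: the Gleason expansion $W_C=\sum_j c_j\,\theta_2^{\,n/2-4j}\,\theta_8^{\,j}$ has about $n/8$ free coefficients, while the extremality conditions $A_2=\cdots=A_{4\lfloor n/24\rfloor+4}=0$ supply only about $n/12$ linear equations, so for Type I there is no uniquely determined extremal enumerator to substitute into the shadow transform. The entire content of Rains' proof is how this deficit is closed: applying the shadow substitution to the Gleason basis gives
\begin{equation*}
W_S \;=\; \sum_j (-1)^j\,2^{\,n/2-6j}\,c_j\,(xy)^{\,n/2-4j}\,\bigl(x^4-y^4\bigr)^{2j},
\end{equation*}
in which the $j$-th term has lowest $y$-degree $n/2-4j$; hence the shadow support conditions (weights $\equiv n/2 \pmod 4$, nonnegativity, and the vanishing or boundedness of the lowest shadow coefficients) pin down the \emph{high}-index $c_j$, while extremality pins down the \emph{low}-index ones, and only the combined system becomes over-determined --- in every residue class except $n\equiv 22\pmod{24}$ --- after which a B\"{u}rmann--Lagrange computation shows a specific coefficient has a forbidden sign uniformly in $n$. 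You carry out none of this residue-by-residue mechanism; instead the decisive step is discharged by citing the very theorem to be proved, which makes the proposal circular at its core. As it stands you have drawn a correct map of the proof's terrain (Gleason, shadow transform, congruence and positivity constraints) but left the actual summit unclimbed.
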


Self-dual codes meeting these bounds are called \textit{extremal}.

\section{Quadratic Residue Codes over $R$}

In this section, quadratic residue codes over the ring $R$ are defined in terms
of their idempotent generators. Extended and subtracted QR codes are also
defined. These codes and their Gray images are investigated. Codes with good
parameters are given as examples. In particular, the Gray image of the
extended quadratic residue code for $p=31$ turned out to be a type II $\left[
96,48,16\right] $ code, with a weight enumerator that was not known to exist before.

\begin{definition}
\label{defQR} Let $p$ be a prime such that $2$ is a quadratic residue modulo
$p$. Set $Q_{1}=\left\langle \left( 1+u^{2}\right) a+\left( u^{2}\right)
b\right\rangle $, $Q_{2}=\left\langle \left( 1+u^{2}\right) b+\left(
u^{2}\right) a\right\rangle $ and $Q_{1}^{\prime }=\left\langle \left(
1+u^{2}\right) a^{\prime }+\left( u^{2}\right) b^{\prime }\right\rangle $, $%
Q_{2}^{\prime }=\left\langle \left( 1+u^{2}\right) b^{\prime }+\left(
u^{2}\right) a^{\prime }\right\rangle $ where $a=e_{1}$, $b=e_{2}$, $%
a^{\prime }=1+e_{2}$ and $b^{\prime }=1+e_{1}$ if $p=8r-1$ and $a=1+e_{1}$, $%
b=1+e_{2}$, $a^{\prime }=$ $e_{2}$ and $b^{\prime }=e_{1}$ if $p=8r+1$.
These four codes are called quadratic residue codes over $R$ of length $p$.
\end{definition}

\begin{theorem}
\label{bQR} With the notation as in Definition \ref{defQR}, the following
hold for $R$-QR codes:

\item[a)] $Q_{1}$ and $Q_{1}^{\prime }$ are equivalent to $Q_{2}$ and $%
Q_{2}^{\prime }$, respectively;

\item[b)] $Q_{1}\cap Q_{2}=\left\langle h\right\rangle $ and $%
Q_{1}+Q_{2}=R_{p}$;

\item[c)] $\left\vert Q_{1}\right\vert =8^{\left( p+1\right) /2}=\left\vert
Q_{2}\right\vert ;$

\item[d)] $Q_{1}=Q_{1}^{\prime }+\left\langle h\right\rangle ,\
Q_{2}=Q_{2}^{\prime }+\left\langle h\right\rangle ;$

\item[e)] $\left\vert Q_{1}^{\prime }\right\vert =8^{\left( p-1\right)
/2}=\left\vert Q_{2}^{\prime }\right\vert ;$

\item[f)] $Q_{1}^{\prime }\cap Q_{2}^{\prime }=\left\{ 0\right\} $ and $%
Q_{1}^{\prime }+Q_{2}^{\prime }=\left\langle 1+h\right\rangle .$
\end{theorem}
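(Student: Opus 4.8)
The plan is to reduce everything to the classical theory of binary quadratic residue codes by splitting $R_p$ along its central idempotents. Since $1+u^2$ and $u^2$ are orthogonal idempotents summing to $1$, one has $R\cong (1+u^2)R\times u^2R\cong \mathbb{F}_2\times S$, where $S=\mathbb{F}_2+(u+u^2)\mathbb{F}_2$ with $(u+u^2)^2=0$ is the ring already identified in Section~2. Carrying this over to $R_p=R[x]/(x^p-1)$ gives a ring isomorphism $R_p\cong A\times B$ with $A=\mathbb{F}_2[x]/(x^p-1)$ and $B=S[x]/(x^p-1)$, under which a generator of the form $(1+u^2)c_1+u^2c_2$ (with $c_1,c_2$ binary idempotents) corresponds to the pair $(c_1,c_2)$. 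Hence each of the four codes is a product of ideals: $Q_1\cong\langle a\rangle_A\times\langle b\rangle_B$, $Q_1'\cong\langle a'\rangle_A\times\langle b'\rangle_B$, and similarly for $Q_2,Q_2'$. Because $a,b,a',b'$ are binary idempotents, the ideal generated by such a $c$ in $B$ is $\langle c\rangle_A+(u+u^2)\langle c\rangle_A$, so $|\langle c\rangle_B|=|\langle c\rangle_A|^2$; this turns every cardinality claim into a statement about $\mathbb{F}_2$-dimensions of binary cyclic codes.

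The second ingredient is a short lemma recording the idempotent arithmetic of the binary residue idempotents $e_1,e_2$ and the all-one idempotent $h$. I would prove there that $e_1+e_2=1+h$ always, and that $e_ih=h$ and $e_1e_2=h$ when $p\equiv 7\pmod 8$, while $e_ih=0$ and $e_1e_2=0$ when $p\equiv 1\pmod 8$. These follow by counting the solutions of $i+j\equiv k\pmod p$ with $i\in Q_p$, $j\in N_p$ and reducing modulo $2$; the relevant parity is that of $(p-1)/2$, which is exactly what the two residue classes of $p$ modulo $8$ control. All of these identities hold verbatim in $B$ as well, since $A\hookrightarrow B$ and the quoted theorem on sums and intersections of idempotent-generated codes is valid over any commutative ring. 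Parts (b) and (f) are then immediate: applying that theorem in each factor, the intersection generator $c_1c_2$ equals $h$ (resp.\ $0$) and the sum generator $c_1+c_2+c_1c_2$ equals $1$ (resp.\ $1+h$), which gives $Q_1\cap Q_2=\langle h\rangle$, $Q_1+Q_2=R_p$, $Q_1'\cap Q_2'=\{0\}$ and $Q_1'+Q_2'=\langle 1+h\rangle$.

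For part (a) I would use the multiplier $\mu_a$ attached to a fixed non-residue $a\in N_p$. It is a permutation of the coordinates $\{0,1,\dots,p-1\}$, it is multiplicative on $R_p$, it fixes $1$ and $h$, and it interchanges $e_1$ and $e_2$; applied coordinatewise it therefore carries the generator of $Q_1$ to that of $Q_2$ and the generator of $Q_1'$ to that of $Q_2'$, so these pairs are permutation equivalent. Part (d) is the single relation $\langle a\rangle=\langle a'\rangle+\langle h\rangle$ checked in each factor: using the sum theorem together with $e_ih\in\{0,h\}$ the sum idempotent collapses to $a$; for instance when $p\equiv 7\pmod 8$ one gets $(1+e_2)+h+(1+e_2)h=1+e_2+h=e_1$, and symmetrically in the other component.

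Parts (c) and (e) are then bookkeeping with dimensions. The equivalence from (a) forces $\dim_{\mathbb{F}_2}\langle a\rangle=\dim_{\mathbb{F}_2}\langle b\rangle$, while (b) and the dimension formula for sums and intersections give $\dim\langle a\rangle+\dim\langle b\rangle=p+1$; hence each equals $(p+1)/2$ and $|Q_1|=2^{(p+1)/2}\cdot(2^{(p+1)/2})^2=8^{(p+1)/2}$. For (e), the intersection $\langle a'\rangle\cap\langle h\rangle=\langle a'h\rangle=\{0\}$ shows the sum in (d) is direct, so $\dim\langle a'\rangle=(p+1)/2-1=(p-1)/2$ and $|Q_1'|=8^{(p-1)/2}$. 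The main obstacle, and the only place where genuine number theory enters, is the lemma pinning down $e_1e_2$ and $e_ih$ as functions of $p\bmod 8$; once it is in place the rest is formal idempotent manipulation propagated through the two factors of $R_p\cong A\times B$. The one point demanding care is the bookkeeping of which polynomial plays the role of $a,b,a',b'$ in the two cases $p=8r\pm1$, since there the roles of $e_i$ and $1+e_i$ are swapped; treating the two congruences in parallel avoids labelling errors.
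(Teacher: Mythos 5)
Your proof is correct, and at its core it runs on the same engine as the paper's: the orthogonal idempotents $1+u^2$ and $u^2$ reduce every idempotent computation to a pair of binary ones, the sum/intersection theorem for idempotent-generated cyclic codes handles (b), (d), (f), and the multiplier $\mu_n$ for a non-residue $n$ gives (a) exactly as in the paper. The differences are in packaging and in where the work is hidden. You make the splitting $R_p\cong A\times B$ with $B=S[x]/(x^p-1)$, $S=\mathbb{F}_2[u+u^2]$, fully explicit and exploit $|\langle c\rangle_B|=|\langle c\rangle_A|^2$ to turn (c) and (e) into binary dimension counts ($\dim\langle a\rangle_A=(p+1)/2$ from the dimension formula, then subtract one for the direct summand $\langle h\rangle$); the paper instead stays over $R$ and gets (c) from $8^p=|Q_1+Q_2|=|Q_1|\,|Q_2|/|Q_1\cap Q_2|=|Q_1|^2/8$ and (e) from $|Q_1|=|Q_1'|\cdot|\langle h\rangle|=8\,|Q_1'|$, which is slightly slicker but implicitly uses the same directness of $Q_1'+\langle h\rangle$ that you verify via $a'h=0$. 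You also supply the one piece of genuine number theory -- the values of $e_1e_2$ and $e_ih$ modulo $2$ as functions of $p\bmod 8$ -- which the paper uses silently ($ab=h$, $a'b'=0$, $a'+h-a'h=a$) and outsources to its earlier $\mathbb{F}_p+v\mathbb{F}_p$ paper; your version is therefore more self-contained, at the cost of the case bookkeeping for $p=8r\pm1$ that you correctly flag as the main hazard. Both routes are sound and yield identical conclusions.
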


\begin{proof}
The proof is an $R$-analogue of the proof of Theorem $3.2$ in \cite{FpvFp}.
We give the proof for the sake of completeness. Let $n\in N_{p}$ then $\mu
_{n}a=b$ and $\mu _{n}a^{\prime }=b^{\prime }$ therefore
\begin{equation*}
\mu _{n}\left[ \left( 1+u^{2}\right) a+u^{2}b\right] =\left( 1+u^{2}\right)
b+u^{2}a
\end{equation*}%
so $Q_{1}$ and $Q_{2}$ are equivalent. Similarly,
\begin{equation*}
\mu _{n}\left[ \left( 1+u^{2}\right) a^{\prime }+u^{2}b^{\prime }\right]
=\left( 1+u^{2}\right) b^{\prime }+u^{2}a^{\prime }
\end{equation*}%
therefore $Q_{1}^{\prime }$ and $Q_{2}^{\prime }$ are also equivalent.

\item[b)] $Q_{1}\cap Q_{2}$ is generated by the idempotent%
\begin{equation*}
\left[ \left( 1+u^{2}\right) a+u^{2}b\right] \left[ \left( 1+u^{2}\right)
b+va\right] =\left( 1+u^{2}\right) ab+u^{2}ab=ab=h
\end{equation*}%
and $Q_{1}+Q_{2}$ is generated by $\left( 1+u^{2}\right) a+u^{2}b+\left(
1+u^{2}\right) b+u^{2}a-ab=a+b-ab=1$.

\item[c)] From above it follows that
\begin{equation*}
\left( 2^{3}\right) ^{p}=\left\vert Q_{1}+Q_{2}\right\vert =\frac{\left\vert
Q_{1}\right\vert \left\vert Q_{2}\right\vert }{\left\vert Q_{1}\cap
Q_{2}\right\vert }=\frac{\left\vert Q_{1}\right\vert ^{2}}{2^{3}},
\end{equation*}%
so $\left\vert Q_{1}\right\vert =\left\vert Q_{2}\right\vert =8^{\left(
p+1\right) /2}$.

\item[d)] $Q_{1}^{\prime }+\left\langle h\right\rangle $ is generated by the
idempotent
\begin{eqnarray*}
&&\left( 1+u^{2}\right) a^{\prime }+u^{2}b^{\prime }+h-\left[ \left(
1+u^{2}\right) a^{\prime }+u^{2}b^{\prime }\right] h \\
&=&\left( 1+u^{2}\right) \left[ a^{\prime }+h-a^{\prime }h\right] +u^{2}%
\left[ b^{\prime }+h-b^{\prime }h\right] =\left( 1+u^{2}\right) a+u^{2}b
\end{eqnarray*}%
It follows that $Q_{1}^{\prime }+\left\langle h\right\rangle =Q_{1}$. In a
similar way, $Q_{2}=Q_{2}^{\prime }+\left\langle h\right\rangle $.

\item[e)] $8^{\left( p+1\right) /2}=\left\vert Q_{1}\right\vert =\left\vert
Q_{1}^{\prime }+\left\langle h\right\rangle \right\vert =\left\vert
Q_{1}^{\prime }\right\vert \left\vert \left\langle h\right\rangle
\right\vert =8\left\vert Q_{1}^{\prime }\right\vert $. Thus $\left\vert
Q_{1}^{\prime }\right\vert =8^{\left( p-1\right) /2}$.

\item[f)] $Q_{1}^{\prime }\cap Q_{2}^{\prime }$ is generated by the
idempotent
\begin{eqnarray*}
&&\left[ \left( 1+u^{2}\right) a^{\prime }+vb^{\prime }\right] \left[ \left(
1+u^{2}\right) b^{\prime }+va^{\prime }\right] \\
&=&\left( 1+u^{2}\right) a^{\prime }b^{\prime }+va^{\prime }b^{\prime } \\
&=&a^{\prime }b^{\prime }=0.
\end{eqnarray*}%
$Q_{1}^{\prime }+Q_{2}^{\prime }$ is generated by $\left( 1+u^{2}\right)
a^{\prime }+u^{2}b^{\prime }+\left( 1+u^{2}\right) b^{\prime
}+u^{2}a^{\prime }-0=a^{\prime }+b^{\prime }=1+h$.
\end{proof}

We define the extended QR codes over $R$ as follows;

\begin{definition}
For $i=1,2$ let $\overline{Q_{i}}$ be the code generated by the matrix
\begin{equation}
\overline{G_{i}}=\left(
\begin{array}{cccccc}
0 &  &  &  &  &  \\
0 &  &  & G_{i}^{\prime } &  &  \\
\vdots &  &  &  &  &  \\
1 & 1 & 1 & 1 & \cdots & 1%
\end{array}%
\right)  \label{gen}
\end{equation}%
where $G_{i}^{\prime }$ is a generating matrix of $Q_{i}^{\prime }$. $%
\overline{Q_{i}}$ are called extended QR codes over $R$.
\end{definition}

\begin{theorem}
\label{extend1}When $p\equiv -1\pmod{8}$ $Q_{1}^{\prime }$ and $%
Q_{2}^{\prime }$ are self-orthogonal and $Q_{1}^{\bot }=Q_{1}^{\prime }$ and
$Q_{2}^{\bot }=Q_{2}^{\prime }$. Moreover, $\overline{Q_{1}}$ and $\overline{%
Q_{2}}$ are self-dual codes.
\end{theorem}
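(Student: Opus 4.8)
The plan is to reduce everything to computations with idempotent generators in $R_{p}$, exactly as in the cited field-coefficient argument, using the dual-idempotent theorem quoted earlier. The one piece of number-theoretic input I need is that $p\equiv -1\pmod 8$ forces $-1\in N_{p}$ (since then $p\equiv 3\pmod 4$), so the permutation $\mu_{-1}$ given by $x\mapsto x^{-1}$ interchanges $Q_{p}$ and $N_{p}$; consequently $e_{1}(x^{-1})=e_{2}$, $e_{2}(x^{-1})=e_{1}$, and hence $a'(x^{-1})=b'$, $b'(x^{-1})=a'$. I will also record that $|Q_{p}|=|N_{p}|=(p-1)/2=4r-1$ is odd, so that $e_{1}(1)=e_{2}(1)=1$ in $\mathbb{F}_{2}$.

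First I would identify $Q_{1}^{\perp}$. Since $Q_{1}=\langle(1+u^{2})e_{1}+u^{2}e_{2}\rangle$, the dual-idempotent theorem gives $Q_{1}^{\perp}=\langle 1-E(x^{-1})\rangle$ with $E=(1+u^{2})e_{1}+u^{2}e_{2}$. Using the swaps above and working in characteristic $2$, $1-E(x^{-1})=1+(1+u^{2})e_{2}+u^{2}e_{1}$. A short expansion of the idempotent generator $E_{1}^{\prime}=(1+u^{2})a'+u^{2}b'=(1+u^{2})(1+e_{2})+u^{2}(1+e_{1})$ of $Q_{1}^{\prime}$ collapses (using $(1+u^{2})+u^{2}=1$) to exactly $1+(1+u^{2})e_{2}+u^{2}e_{1}$. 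Since two cyclic codes with the same idempotent generator coincide, this yields $Q_{1}^{\perp}=Q_{1}^{\prime}$, and the identical computation with $e_{1}\leftrightarrow e_{2}$ gives $Q_{2}^{\perp}=Q_{2}^{\prime}$. Self-orthogonality is then immediate: part (d) gives $Q_{1}^{\prime}\subseteq Q_{1}$, so dualizing reverses the inclusion to $Q_{1}^{\prime}=Q_{1}^{\perp}\subseteq(Q_{1}^{\prime})^{\perp}$; likewise for $Q_{2}^{\prime}$.

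For the self-duality of $\overline{Q_{1}}$ I would verify self-orthogonality of the generator matrix $\overline{G_{1}}$ row-by-row and then match sizes. Any two rows coming from $G_{1}^{\prime}$ (each padded with a leading $0$) are orthogonal because $Q_{1}^{\prime}$ is self-orthogonal. The all-one border row $\mathbf{1}$ of length $p+1$ is isotropic since $\langle\mathbf{1},\mathbf{1}\rangle=(p+1)\cdot 1=0$ in the characteristic-$2$ ring $R$ ($p+1=8r$ is even). Finally, the inner product of $\mathbf{1}$ with a padded row equals the coordinate sum of the corresponding codeword of $Q_{1}^{\prime}$, i.e.\ the evaluation at $x=1$ of an element $m\,E_{1}^{\prime}$; this is $m(1)E_{1}^{\prime}(1)$, and $E_{1}^{\prime}(1)=(1+u^{2})(1+e_{2}(1))+u^{2}(1+e_{1}(1))=0$ because $e_{1}(1)=e_{2}(1)=1$. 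Hence $\overline{G_{1}}$ is self-orthogonal, so $\overline{Q_{1}}\subseteq\overline{Q_{1}}^{\perp}$.

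Since the border row has a nonzero first coordinate while every padded $G_{1}^{\prime}$ row has first coordinate $0$, the border row is independent of the others and contributes a free factor $|R|=8$, giving $|\overline{Q_{1}}|=8\cdot|Q_{1}^{\prime}|=8\cdot 8^{(p-1)/2}=8^{(p+1)/2}$ by part (e). As $8^{(p+1)/2}=\sqrt{|R|^{\,p+1}}$, Lemma \ref{dual} forces $\overline{Q_{1}}=\overline{Q_{1}}^{\perp}$, and symmetrically $\overline{Q_{2}}$ is self-dual. I expect the main obstacle to be the bookkeeping in the border-row orthogonality step—correctly identifying $\langle\mathbf{1},\,\text{row}\rangle$ with $E_{1}^{\prime}(1)$ and exploiting the parity of $(p-1)/2$—together with keeping the $\mu_{-1}$ reversal consistent inside the dual-idempotent formula; the remaining inclusions and the size count are then routine.
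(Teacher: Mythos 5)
Your proposal is correct and follows essentially the same route as the paper: compute the idempotent of $Q_{1}^{\perp}$ via $1-E(x^{-1})$ using $-1\in N_{p}$ to swap $e_{1}$ and $e_{2}$, match it with the idempotent of $Q_{1}^{\prime}$, deduce self-orthogonality from $Q_{1}^{\prime}\subseteq Q_{1}=(Q_{1}^{\prime})^{\perp}$, and then check the rows of $\overline{G_{1}}$ are mutually orthogonal. You merely spell out the details the paper leaves as ``easily observed'' (the border-row inner products and the cardinality count $|\overline{Q_{1}}|=8^{(p+1)/2}$), and these details are all accurate.
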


\begin{proof}
$Q_{1}^{\bot }$ is generated by the idempotent $1-\left( \left(
1+u^{2}\right) e_{1}\left( x^{-1}\right) +u^{2}e_{2}\left( x^{-1}\right)
\right) $ and since $-1\in N_{p},\ e_{1}\left( x^{-1}\right) =e_{2}$ and $%
e_{2}\left( x^{-1}\right) =e_{1}$. So we have%
\begin{eqnarray*}
1-\left( \left( 1+u^{2}\right) e_{1}\left( x^{-1}\right) +u^{2}e_{2}\left(
x^{-1}\right) \right) &=&1+u^{2}+\left( 1+u^{2}\right) e_{1}\left(
x^{-1}\right) +u^{2}-u^{2}e_{2}\left( x^{-1}\right) \\
&=&\left( 1+u^{2}\right) \left[ 1-e_{1}\left( x^{-1}\right) \right] +u^{2}%
\left[ 1-e_{2}\left( x^{-1}\right) \right] \\
&=&\left( 1+u^{2}\right) \left( 1+e_{2}\right) +u^{2}\left( 1+e_{1}\right)
\end{eqnarray*}%
which implies $Q_{1}^{\bot }=Q_{1}^{\prime }$. Similarly, $Q_{2}^{\bot
}=Q_{2}^{\prime }$. So, $Q_{1}^{\prime }$ and $Q_{2}^{\prime }$ are
self-orthogonal since by Theorem \ref{bQR}, $Q_{1}^{\prime }\subset
Q_{1}=\left( Q_{1}^{\prime }\right) ^{\perp }$ and $Q_{2}^{\prime }\subset
Q_{2}=\left( Q_{2}^{\prime }\right) ^{\perp }$.

It is easily observed that for $i=1,2$ the rows of the matrix in $\left( \ref%
{gen}\right) $ are orthogonal to each other. Hence, $\overline{Q_{1}}$ and $%
\overline{Q_{2}}$ are self-dual codes.
\end{proof}

\begin{corollary}
The codes $\overline{Q_{1}}$ and $\overline{Q_{2}}$ are Type II codes and so
are the binary images.
\end{corollary}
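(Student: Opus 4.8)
The plan is to derive the corollary from two ingredients already in hand: $\overline{Q_1}$ and $\overline{Q_2}$ are self-dual over $R$ by Theorem \ref{extend1}, and the Gray map $\varphi$ sends a Type II code over $R$ to a Type II binary code (the corollary immediately following Proposition \ref{duality}, which applies here since the extended length $p+1$ is even). Consequently the entire statement collapses to showing that $\overline{Q_1}$ and $\overline{Q_2}$ are Type II over $R$, i.e.\ that every codeword has Lee weight divisible by $4$. Because $w_L$ is by definition the Hamming weight of the Gray image and $\varphi(\overline{Q_i})$ is a binary self-dual code by Proposition \ref{duality}, this is equivalent to proving that $\varphi(\overline{Q_i})$ is doubly-even.

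For a binary self-orthogonal code, doubly-evenness is equivalent to possessing an $\mathbb{F}_2$-generating set all of whose weights are divisible by $4$: if $a,b$ are orthogonal then $w_H(a+b)=w_H(a)+w_H(b)-2\,n(a,b)$, where $n(a,b)$ is the number of coordinates in which both are nonzero, and self-orthogonality forces $n(a,b)$ to be even, so $w_H(a+b)\equiv w_H(a)+w_H(b)\pmod 4$. Since $\varphi$ is $\mathbb{F}_2$-linear and $R=\mathbb{F}_2+u\mathbb{F}_2+u^2\mathbb{F}_2$ has $\{1,u,u^2\}$ as an $\mathbb{F}_2$-basis, any $R$-multiple decomposes as $(a+bu+cu^2)w=aw+b(uw)+c(u^2w)$, so $\varphi(\overline{Q_i})$ is spanned over $\mathbb{F}_2$ by the vectors $\varphi(\gamma w)$ with $\gamma\in\{1,u,u^2\}$ and $w$ an $R$-generator of $\overline{Q_i}$, namely the cyclic shifts of the extended idempotent $\overline{v}$ (with $v=(1+u^2)a'+u^2b'$) together with the all-ones row. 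As cyclic shifts preserve Lee weight, it suffices to check that $w_L(\gamma\,\overline{v})$ and $w_L(\gamma\,\mathbf 1)$ are divisible by $4$ for $\gamma\in\{1,u,u^2\}$.

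These weights I would read off from the coordinate structure. Writing $a'=1+e_2$ and $b'=1+e_1$ as the indicator vectors of $\{0\}\cup N_p$ and $\{0\}\cup Q_p$, the coordinatewise value of $v$ is $1$ at position $0$, $u^2$ at each residue, and $1+u^2$ at each non-residue, so the Gray images are $(1,0,0)$, $(0,1,1)$, $(1,1,1)$, giving $w_L(\overline{v})=1+\tfrac{5(p-1)}{2}$. Using $u^3=u$ and characteristic $2$, the $a'$-contributions cancel, leaving $uv=ub'$ and $u^2v=u^2b'$, whose coordinatewise images are $(b'_i,b'_i,0)$ and $(0,b'_i,b'_i)$; since $b'$ has $1+\tfrac{p-1}{2}$ nonzero entries this yields $w_L(u\overline{v})=w_L(u^2\overline{v})=p+1$, while the all-ones row gives $w_L(\mathbf 1)=p+1$ and $w_L(u\mathbf 1)=w_L(u^2\mathbf 1)=2(p+1)$. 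Substituting $p=8r-1$ converts these to $20r-4$, $8r$ and $16r$, each divisible by $4$. Hence the spanning set is doubly-even, so $\varphi(\overline{Q_i})$ is doubly-even, $\overline{Q_i}$ is Type II, and the binary image is Type II as well.

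The main obstacle is simply the careful bookkeeping in the third step: pinning down the correct coordinate triples for $v$, $uv$ and $u^2v$, where the relation $u^3=u$ together with characteristic $2$ produces the cancellations that leave only even-weight contributions for the $u$- and $u^2$-multiples, and then confirming that the arithmetic really lands in $4\mathbb{Z}$. This last point is precisely where the hypothesis $p\equiv-1\pmod 8$ is used, through $\tfrac{p-1}{2}$ being odd and $p+1\equiv 0\pmod 8$. One should also verify explicitly that the shifts-of-idempotent-together-with-all-ones set spans the full $\mathbb{F}_2$-image, so that doubly-evenness of this generating set propagates to the whole code.
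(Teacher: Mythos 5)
Your proposal is correct and follows essentially the same route as the paper: both arguments verify that the all-ones row, the shifts of the idempotent generator, and their $u$- and $u^2$-multiples have Lee weights $4(5r-1)$, $8r$, and $16r$ when $p=8r-1$, and then invoke the standard fact that a self-orthogonal binary code with a doubly-even generating set is doubly-even. Your write-up merely makes explicit two steps the paper leaves implicit (the mod-$4$ additivity lemma and the cancellation $u(1+u^2)=u^2(1+u^2)=0$ behind the weights of the $u$- and $u^2$-multiples).
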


\begin{proof}
Since $p\equiv -1\pmod{8}$ the weight of the last row of the matrix (\ref{gen})
has weight a multiple of $8$. The other rows have weight $w=1.1+\left( \frac{%
p-1}{2}\right) 2+\left( \frac{p-1}{2}\right) 3$ since weights of $1,u^{2}$
and $1+u^{2}$ are $1,2$ and $3$, respectively. If $p=8k-1$ then $w=4\left(
5k-1\right) $. Similarly, a row of the matrix \ref{gen} multiplied by $u$ or
$u^{2}$ has weight a multiple of $4$. So, $\overline{Q_{1}}$ and $\overline{%
Q_{2}}$ are type II codes. The result follows.
\end{proof}

\begin{theorem}
\label{extend2}When $p\equiv 1\pmod{8}$ $Q_{1}^{\bot }=Q_{2}^{\prime }$ and $%
Q_{2}^{\bot }=Q_{1}^{\prime }$. Furthermore, $\overline{Q_{1}}$ and $%
\overline{Q_{2}}$ are duals of each other.
\end{theorem}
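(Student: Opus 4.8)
The plan is to mirror the idempotent-based computation used in the proof of Theorem \ref{extend1}, but now accounting for the fact that when $p\equiv 1\pmod 8$ we have $-1\in Q_p$, so $\mu_{-1}$ fixes the quadratic residues rather than swapping residues and non-residues. Concretely, since $-1\in Q_p$ we have $e_1(x^{-1})=e_1$ and $e_2(x^{-1})=e_2$. Using Theorem \ref{bQR} together with the dual-idempotent formula (the second cited theorem: the dual of a cyclic code generated by $e(x)$ is generated by $1-e(x^{-1})$), I would compute the idempotent generating $Q_1^{\perp}$.

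First I would recall that for $p=8r+1$ the code $Q_1$ is generated by the idempotent $(1+u^2)a+u^2 b$ with $a=1+e_1$ and $b=1+e_2$. Then $Q_1^{\perp}$ is generated by
\begin{equation*}
1-\bigl((1+u^2)a(x^{-1})+u^2 b(x^{-1})\bigr).
\end{equation*}
Substituting $a(x^{-1})=1+e_1(x^{-1})=1+e_1$ and $b(x^{-1})=1+e_2(x^{-1})=1+e_2$, and simplifying exactly as in Theorem \ref{extend1} (splitting $1=(1+u^2)+u^2$ and distributing), I expect the result to collapse to $(1+u^2)e_2+u^2 e_1$. Comparing this with Definition \ref{defQR} for the $p=8r+1$ case, where $a'=e_2$ and $b'=e_1$, one sees this is precisely the generating idempotent of $Q_2^{\prime}$, giving $Q_1^{\perp}=Q_2^{\prime}$. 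The claim $Q_2^{\perp}=Q_1^{\prime}$ then follows by the symmetric computation (swapping the roles of $a,b$), or alternatively by taking duals of the first identity and invoking $(C^{\perp})^{\perp}=C$ together with part (d) of Theorem \ref{bQR}.

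For the final assertion that $\overline{Q_1}$ and $\overline{Q_2}$ are duals of each other, I would argue at the level of the generator matrices in \eqref{gen}. The key observation is that $Q_1^{\perp}=Q_2^{\prime}$ means the rows of $G_1^{\prime}$ (generating $Q_1^{\prime}$) are orthogonal to the rows of $G_2^{\prime}$ (generating $Q_2^{\prime}=Q_1^{\perp}$), since $Q_1^{\prime}\subseteq Q_1$ and $Q_2^{\prime}=Q_1^{\perp}$ are orthogonal. I then need to check that the bordering row $(1\,1\,\cdots\,1)$ appended to each generator is orthogonal to the other code's rows: the all-one border vectors pair to give $p+1$ ones, which vanishes mod $2$ since $p$ is odd, and each border row is orthogonal to the $Q_i^{\prime}$ part because of the leading zero column together with $h$ being the all-one codeword fixed by the residue structure. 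A dimension count using part (e) of Theorem \ref{bQR}, namely $|Q_i^{\prime}|=8^{(p-1)/2}$ so $|\overline{Q_i}|=8^{(p+1)/2}$, confirms via Lemma \ref{dual} that $|\overline{Q_1}|\cdot|\overline{Q_2}|=8^{p+1}=|R|^{p+1}$, which is exactly the cardinality condition for $\overline{Q_2}=\overline{Q_1}^{\perp}$.

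The main obstacle I anticipate is the bookkeeping around the extension (border) column rather than the cyclic part: one must verify orthogonality of the bordered rows carefully, making sure the single extra coordinate and the leading zero entries of the $G_i^{\prime}$ rows combine correctly so that the full bordered generators of $\overline{Q_1}$ and $\overline{Q_2}$ annihilate each other under the Euclidean inner product. The purely cyclic/idempotent part is routine once $e_i(x^{-1})=e_i$ is used, so the real care is in confirming that the extension preserves the duality rather than merely the self-orthogonality that appeared in the $p\equiv -1$ case.
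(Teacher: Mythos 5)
Your overall route is the paper's: compute the dual idempotent of $Q_{1}$ via the formula $1-e(x^{-1})$, use $e_{i}(x^{-1})=e_{i}$ (valid because $p\equiv 1\pmod 8$ forces $-1\in Q_{p}$), and then treat the bordered codes by row orthogonality plus a cardinality count. However, the central computation as you state it is wrong, and only a second, compensating error lands you on the correct conclusion. In characteristic $2$, with $a=1+e_{1}$ and $b=1+e_{2}$, the constant terms cancel and one gets
\begin{equation*}
1-\bigl((1+u^{2})(1+e_{1})+u^{2}(1+e_{2})\bigr)=(1+u^{2})e_{1}+u^{2}e_{2},
\end{equation*}
not $(1+u^{2})e_{2}+u^{2}e_{1}$ as you predict. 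Moreover, since Definition \ref{defQR} sets $a'=e_{2}$ and $b'=e_{1}$ for $p=8r+1$, the element you wrote down, $(1+u^{2})e_{2}+u^{2}e_{1}=(1+u^{2})a'+u^{2}b'$, is the idempotent of $Q_{1}'$, not of $Q_{2}'$; the idempotent of $Q_{2}'$ is $(1+u^{2})b'+u^{2}a'=(1+u^{2})e_{1}+u^{2}e_{2}$. Taken at face value your intermediate step would therefore prove $Q_{1}^{\perp}=Q_{1}'$, which is false here (that identity is the $p\equiv-1\pmod 8$ phenomenon of Theorem \ref{extend1}). The two slips cancel, so your stated conclusion $Q_{1}^{\perp}=Q_{2}'$ is right, but the step as written fails; the fix is simply to carry the simplification through and match $(1+u^{2})e_{1}+u^{2}e_{2}$ against $Q_{2}'=\langle(1+u^{2})b'+u^{2}a'\rangle$.

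The second half of your argument is sound and follows the paper: the extended rows of $\overline{G_{1}}$ and $\overline{G_{2}}$ are mutually orthogonal because $Q_{1}'\subseteq Q_{1}=(Q_{2}')^{\perp}$; the border row is orthogonal to the extended rows because $h\in Q_{1}\cap Q_{2}$ lies in $(Q_{2}')^{\perp}$ and $(Q_{1}')^{\perp}$, and the two border rows pair to $p+1\equiv 0\pmod 2$. Your explicit count $|\overline{Q_{1}}|\cdot|\overline{Q_{2}}|=8^{(p+1)/2}\cdot 8^{(p+1)/2}=|R|^{p+1}$ via Lemma \ref{dual} is in fact slightly more careful than the paper, which stops at orthogonality of the generating rows and leaves the cardinality comparison implicit.
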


\begin{proof}
$Q_{1}^{\bot }$ has idempotent generator
\begin{eqnarray*}
1-\left( \left( 1+u^{2}\right) \left( 1+e_{1}\left( x^{-1}\right) \right)
+u^{2}\left( 1+e_{2}\left( x^{-1}\right) \right) \right) &=&\left(
1+u^{2}\right) e_{1}\left( x^{-1}\right) +u^{2}e_{2}\left( x^{-1}\right) \\
&=&\left( 1+u^{2}\right) e_{1}\left( x\right) +u^{2}e_{2}\left( x\right)
\end{eqnarray*}%
which is the idempotent generator of $Q_{2}^{\prime }$. It follows that $%
Q_{1}^{\bot }=Q_{2}^{\prime }$. By similar steps we have $Q_{2}^{\bot
}=Q_{1}^{\prime }$.

$\left( Q_{2}^{\prime }\right) ^{\perp }=Q_{1}$ and $\left( Q_{1}^{\prime
}\right) ^{\perp }=Q_{2}$ it follows that the first $\frac{p-1}{2}$ rows of
matrix $\overline{G_{1}}$ are orthogonal to the first $\frac{p-1}{2}$ rows
of $\overline{G_{2}}$. All $1$ vector of length $p$ is in both $Q_{1}$ and $%
Q_{2}$ so it is in their dual spaces which implies the last rows of $%
\overline{G_{1}}$ and $\overline{G_{2}}$ are orthogonal to first $\frac{p-1}{%
2}$ rows of $\overline{G_{2}}$ and $\overline{G_{1}}$ respectively. It is
easily observed that the last rows of $\overline{G_{1}}$ and $\overline{G_{2}%
}$ are orthogonal. Hence, we have the result $\left( \overline{Q_{1}}\right)
^{\perp }=\overline{Q_{2}}$.
\end{proof}

Since the corresponding codes are equivalent, from now on we will use the
notations $QR^{\prime }\left( p\right) ,QR\left( p\right) $ and $\overline{%
QR\left( p\right) }$ for $Q_{1}^{\prime }$, $Q_{1}$ and $\overline{Q_{1}}$
respectively. By theorems \ref{extend1}, \ref{extend2} and proposition \ref%
{duality} we have the following result;

\begin{corollary}
$\overline{QR\left( p\right) }$ and its Gray image are self-dual codes when $%
p\equiv -1\pmod{8}$ and isodual codes when $p\equiv 1\pmod{8}$ .
\end{corollary}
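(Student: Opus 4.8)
The plan is to treat the two congruence classes of $p$ modulo $8$ separately, since Theorems \ref{extend1} and \ref{extend2} already furnish the required structural facts over $R$, after which Proposition \ref{duality} (and the identity buried in its proof) transports everything to the binary Gray image. In both cases the claim about $\overline{QR(p)}$ over $R$ comes essentially for free from the cited theorems, and the work lies in correctly pushing the conclusion through $\varphi$.

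First, for $p\equiv -1\pmod 8$, Theorem \ref{extend1} states outright that $\overline{Q_1}=\overline{QR(p)}$ is self-dual over $R$. I would then simply invoke Proposition \ref{duality}, which asserts that the Gray image of a self-dual code over $R$ is binary self-dual. So this half is immediate and needs no further argument.

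Second, for $p\equiv 1\pmod 8$, the target is iso-duality rather than self-duality. Theorem \ref{extend2} gives $(\overline{Q_1})^{\perp}=\overline{Q_2}$, so it suffices to exhibit a coordinate permutation carrying $\overline{Q_1}$ onto $\overline{Q_2}$. I would make this explicit: pick any $n\in N_p$, so that, by the computation in Theorem \ref{bQR}(a), $\mu_n$ sends $Q_1'$ onto $Q_2'$; since $\mu_n$ fixes the index $0$, extending it by fixing the appended border coordinate yields a permutation of the $p+1$ coordinates. This extended permutation carries the $G_1'$-rows of $\overline{G_1}$ to the corresponding $G_2'$-rows and fixes the all-ones bottom row, hence maps $\overline{Q_1}$ onto $\overline{Q_2}$. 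Combined with $(\overline{Q_1})^{\perp}=\overline{Q_2}$, this shows $\overline{Q_1}$ is permutation equivalent to its dual, i.e. iso-dual over $R$. To reach the Gray image, I would use two observations: $\varphi$ acts coordinate-wise, so a coordinate permutation $\sigma$ over $R^{p+1}$ induces a block permutation $\tilde\sigma$ over $\mathbb{F}_2^{3(p+1)}$ with $\varphi\circ\sigma=\tilde\sigma\circ\varphi$, whence permutation-equivalent $R$-codes have permutation-equivalent binary images; and the identity $\varphi(C^{\perp})=\varphi(C)^{\perp}$ proved inside Proposition \ref{duality}. Together these give $\varphi(\overline{Q_2})=\varphi((\overline{Q_1})^{\perp})=\varphi(\overline{Q_1})^{\perp}$ while $\varphi(\overline{Q_1})$ is permutation equivalent to $\varphi(\overline{Q_2})$, so $\varphi(\overline{Q_1})$ is permutation equivalent to its own dual.

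The main obstacle I anticipate is the bookkeeping in the iso-dual half: confirming that the equivalence $\overline{Q_1}\sim\overline{Q_2}$ is realized by a genuine coordinate permutation (in particular that the border coordinate is handled correctly by extending $\mu_n$ so that the appended column is preserved), and that $\varphi$ really intertwines coordinate permutations over $R$ with block permutations over $\mathbb{F}_2$. By contrast, the self-dual half and the duality-preservation step are little more than quotations of results established earlier in the excerpt.
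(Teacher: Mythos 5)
Your proposal is correct and follows exactly the route the paper intends: the paper offers no explicit proof, merely citing Theorems \ref{extend1}, \ref{extend2} and Proposition \ref{duality}, and your argument is a careful fleshing-out of precisely those ingredients. The extra details you supply (extending $\mu_n$ by fixing the border coordinate to realize $\overline{Q_1}\sim\overline{Q_2}$, and checking that $\varphi$ intertwines coordinate permutations with block permutations) are the right ones and are all sound.
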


For the case $p\equiv -1\pmod{8}$, we define the subtracted codes which are
Type I codes as follows;

\begin{definition}
The codes denoted by $SQR\left( p\right) $ and $BSQR\left( p\right) $ are
called subtracted and binary subtracted quadratic residue codes,
respectively and are defined as follows:
\begin{eqnarray*}
SQR\left( p\right) &=&\left\{ c\in R^{p-1}|(a,c,a)\in \overline{QR\left(
p\right) }\text{ for some }a\in R\right\} , \\
BSQR\left( p\right) &=&\left\{ c\in \mathbb{F}_{2}^{3p+1}|(a,c,a)\in \varphi
\left( \overline{QR\left( p\right) }\text{ }\right) \text{ for some }a\in
\mathbb{F}_{2}\right\} .
\end{eqnarray*}
\end{definition}

\begin{example}
For $p=7$ the odd-like quadratic residue code $QR^{\prime }$ has idempotent
generator $\left( 1+u^{2}\right) \left( 1+e_{2}\right) +\left( u^{2}\right)
\left( 1+e_{1}\right) =1+u^{2}e_{1}+\left( 1+u^{2}\right) e_{2}$ and the
code is self-orthogonal. So, $\overline{QR\left( 7\right) }$ is the code
generated by the matrix
\begin{equation*}
\left[
\begin{array}{cccccccc}
0 & 1 & u^{2} & u^{2} & 1+u^{2} & u^{2} & 1+u^{2} & 1+u^{2} \\
0 & 1+u^{2} & 1 & u^{2} & u^{2} & 1+u^{2} & u^{2} & 1+u^{2} \\
0 & 1+u^{2} & 1+u^{2} & 1 & u^{2} & u^{2} & 1+u^{2} & u^{2} \\
1 & 1 & 1 & 1 & 1 & 1 & 1 & 1%
\end{array}%
\right]
\end{equation*}%
and the binary Gray image of the code is the Golay code which is the unique
extremal Type II $[24,12,8]_{2}$ code. $\ SQR\left( 7\right) $ is the code
generated by
\begin{equation*}
\left[
\begin{array}{cccccc}
u^{2} & 1+u^{2} & 0 & 1 & 1 & 1 \\
1+u^{2} & 1 & 1 & u^{2} & 0 & 1 \\
1 & 1 & 1 & 1 & 1 & 1%
\end{array}%
\right]
\end{equation*}%
and the Gray image is an extremal $\left[ 18,9,4\right] _{2}$ code.
Similarly, binary subtracted code $BSQR\left( 7\right) $ is an extremal $%
\left[ 22,11,6\right] _{2}$ code.
\end{example}

\begin{example}
The Gray image of the code $\overline{QR\left( 23\right) }$ is a self-dual
Type II $\left[ 72,36,12\right] _{2}$ code and its weight enumerator has the
following form
\begin{equation*}
W_{72}=1+\left( 4398+\alpha \right) y^{12}+\left( 197073-12\alpha \right)
y^{16}+\cdots
\end{equation*}%
$\phi \left( \overline{QR\left( 23\right) }\right) $ has $\alpha =-1362$ and
$\left\vert Aut\right\vert =36432=2^{4}3^{2}11\times 23$. The code was
introduced in \cite{dontcheva2}. Moreover, the code $BSQR\left( 23\right) $
is an extremal $\left[ 70,35,12\right] _{2}$ self-dual code which was also
constructed in \cite{dontcheva2}.
\end{example}

As in the previous examples, the case for $p=31$ is also interesting. So, we
note it as the last example;

\begin{example}
A self-dual Type II $\left[ 96,48,16\right] _{2}$-code has weight enumerator%
\begin{equation*}
W_{96}=1+\left( -28086+\alpha \right) y^{16}+\left( 3666432-16\alpha \right)
y^{20}+\cdots
\end{equation*}%
The first such code with $\alpha =37722$ is constructed in \cite{feit} by a
construction from extended binary quadratic residue codes of length $32$ and
$25$ new codes are constructed in \cite{dontcheva} via automorphisms of
order $23$. In our case, the Gray image of $\overline{QR\left( 31\right) }$
has a weight enumerator with $\alpha =41106$ and $\left\vert Aut\right\vert
=89280=2^{6}3^{2}5\times 31$. A code with this weight enumerator was not
known previously. The binary subtracted quadratic residue code $BSQR\left(
31\right) $ is a $\left[ 94,47,14\right] _{2}$ code.
\end{example}

We finish this section by combining the results in the following tables:

\begin{center}
Table 1: QR codes for $p=8r-1$

\bigskip

\begin{tabular}{|l|l|l|}
\hline
& The code over $R$ & binary Gray image \\ \cline{2-3}
$QR^{\prime }\left( 7\right) $ & $\left( 7,8^{3},8\right) $ & $[21,9,8]_{2}$
\\ \cline{2-3}
$QR\left( 7\right) $ & $\left( 7,8^{4},5\right) $ & $[21,12,5]_{2}$ \\
\cline{2-3}
$\overline{QR\left( 7\right) }$ & $\left( 8,8^{4},6\right) $ & $%
[24,12,8]_{2} $ extremal self-dual \\ \cline{2-3}
$QR^{\prime }\left( 23\right) $ & $\left( 23,8^{11},12\right) $ & $%
[69,33,12]_{2}$ \\ \cline{2-3}
$QR\left( 23\right) $ & $\left( 23,8^{12},11\right) $ & $[69,36,11]_{2}$ \\
\cline{2-3}
$\overline{QR\left( 23\right) }$ & $\left( 24,8^{12},12\right) $ & $%
[72,36,12]_{2}$ Type II $\alpha =-1362$ in $W_{72}$ \\ \cline{2-3}
$QR^{\prime }\left( 31\right) $ & $\left( 31,8^{15},16\right) $ & $%
[93,45,16]_{2}$ \\ \cline{2-3}
$QR\left( 31\right) $ & $\left( 31,8^{16},14\right) $ & $[93,48,14]_{2}$ \\
\cline{2-3}
$\overline{QR\left( 31\right) }$ & $\left( 32,8^{16},16\right) $ & $%
[96,48,16]_{2}$ Type II $\alpha =41106$ in $W_{96}$ \\ \hline
\end{tabular}

\bigskip

Table 2: QR codes for $p=8r+1$

\bigskip

\begin{tabular}{|l|l|l|}
\hline
& The code over $R$ & binary Gray image \\ \cline{2-3}
$QR^{\prime }\left( 17\right) $ & $\left( 17,8^{8},10\right) $ & $%
[51,24,10]_{2}$ \\ \cline{2-3}
$QR\left( 17\right) $ & $\left( 17,8^{9},9\right) $ & $[51,27,9]_{2}$ \\
\cline{2-3}
$\overline{QR\left( 17\right) }$ & $\left( 18,8^{9},10\right) $ & $%
[54,27,10]_{2}$ \ isodual \\ \cline{2-3}
$QR^{\prime }\left( 41\right) $ & $\left( 41,8^{20},20\right) $ & $%
[123,60,20]_{2}$ \\ \cline{2-3}
$QR\left( 41\right) $ & $\left( 41,8^{21},18\right) $ & $[123,63,18]_{2}$ \\
\cline{2-3}
$\overline{QR\left( 41\right) }$ & $\left( 42,8^{21},20\right) $ & $%
[126,63,20]_{2}$ isodual \\ \hline
\end{tabular}
\end{center}

\section{Quadratic double circulant (QDC) and bordered QDC codes}

Quadratic double circulant codes are introduced in \cite{gaborit}. It is
observed that QDC codes is an important family of codes. In this section, we
define QDC codes over $R$ and obtain some families of self-dual codes. Some
extremal binary codes are obtained as Gray images of these codes. A Type II $%
[72,36,12]_{2}$ code with a new weight enumerator is obtained as the image
of a bordered QDC code for $p=11$.

Let $S$ be a commutative ring with identity, $r,s,t$ be elements of $S$, $v$
be the vector of length $p$ over $S$ and we label the $i$-th column by $%
i-1\in \mathbb{F}_{p}$ and define $i$-th entry of $v$ as $r$ if $i=1$, $s$
if $i-1$ is a quadratic residue in $\mathbb{F}_{p}$ and $t$ otherwise. Let $%
Q_{p}\left( r,s,t\right) $ be the $p\times p$ circulant matrix with the
first row $v$. Theorem 3.1 in \cite{gaborit} can be restated for the special
case where $q=p$ a prime and $char\left( S\right) =2$ as follows;

\begin{theorem}
\cite{gaborit}\label{qdc}Let $p$ be an odd prime and let $Q_{p}\left(
r,s,t\right) $ be a quadratic residue circulant matrix with $r,s$ and $t$
elements of the ring $S$. If $p=4k+1$ then%
\begin{eqnarray*}
&&Q_{p}\left( r,s,t\right) Q_{p}\left( r,s,t\right) ^{T} \\
&=&Q_{p}\left( r^{2},-s^{2}+k\left( s+t\right) ^{2},-t^{2}+k\left(
s+t\right) ^{2}\right)
\end{eqnarray*}%
If $p=4k+3$ then
\begin{eqnarray*}
&&Q_{p}\left( r,s,t\right) Q_{p}\left( r,s,t\right) ^{T} \\
&=&Q_{p}\left( r^{2}+s^{2}+t^{2},rs+rt+k\left( s+t\right)
^{2}+st,rs+rt+k\left( s+t\right) ^{2}+st\right) .
\end{eqnarray*}
\end{theorem}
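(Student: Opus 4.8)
The plan is to treat $Q_{p}(r,s,t)$ as a circulant matrix and to compute the product with its transpose as an autocorrelation of its symbol. Index rows and columns by $\mathbb{F}_{p}$ and define $f:\mathbb{F}_{p}\rightarrow S$ by $f(0)=r$, $f(x)=s$ for $x\in Q_{p}$ and $f(x)=t$ for $x\in N_{p}$, so that the $(a,b)$ entry of $Q_{p}(r,s,t)$ is $f(b-a)$. Then the $(a,c)$ entry of $M:=Q_{p}(r,s,t)\,Q_{p}(r,s,t)^{T}$ is $\sum_{b}f(b-a)f(b-c)$, which after the substitution $j=b-a$ becomes $g(c-a):=\sum_{j\in\mathbb{F}_{p}}f(j)f\!\left(j-(c-a)\right)$. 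Since this depends only on $c-a$, the product $M$ is again circulant; moreover $M$ is symmetric, so its symbol satisfies $g(d)=g(-d)$. Hence it suffices to evaluate the three quantities $g(0)$, $g(d)$ for $d\in Q_{p}$ and $g(d)$ for $d\in N_{p}$, and to identify them with $r',s',t'$ in a matrix $Q_{p}(r',s',t')$.

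First I would compute the diagonal term $g(0)=\sum_{j}f(j)^{2}=r^{2}+\tfrac{p-1}{2}s^{2}+\tfrac{p-1}{2}t^{2}$, since there are $(p-1)/2$ residues and $(p-1)/2$ non-residues. Because $\operatorname{char}(S)=2$, the coefficient $(p-1)/2$ reduces to its parity: even when $p=4k+1$ and odd when $p=4k+3$, giving $g(0)=r^{2}$ in the first case and $g(0)=r^{2}+s^{2}+t^{2}$ in the second, as claimed. For $d\neq 0$ I would isolate the two indices $j\in\{0,d\}$ at which an argument of $f$ vanishes: they contribute $r\bigl(f(d)+f(-d)\bigr)$. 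When $p=4k+3$ we have $-1\in N_{p}$, so $d$ and $-d$ lie in opposite classes and $f(d)+f(-d)=s+t$, producing $r(s+t)=rs+rt$; when $p=4k+1$ we have $-1\in Q_{p}$, so $f(d)=f(-d)$ and this term is $2rf(d)=0$. The remaining sum $\sum_{j\neq 0,d}f(j)f(j-d)$ runs over pairs with both arguments nonzero, and equals $A_{00}s^{2}+(A_{01}+A_{10})st+A_{11}t^{2}$, where $A_{\varepsilon\delta}$ counts the indices $j$ with $j$ in class $\varepsilon$ and $j-d$ in class $\delta$ (classes $0=Q_{p}$, $1=N_{p}$).

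The key step is to evaluate the counts $A_{\varepsilon\delta}$. Via the substitution $j\mapsto dj$ and the multiplicativity of the Legendre symbol, each $A_{\varepsilon\delta}$ depends on $d$ only through its class and reduces to a cyclotomic number of order $2$ modulo $p$; these numbers are known explicitly and differ according to $p\bmod 4$. Substituting their values, reducing all integer coefficients modulo $2$, and using $(s+t)^{2}=s^{2}+t^{2}$, I expect the off-diagonal symbol to collapse to $-s^{2}+k(s+t)^{2}$ at residue positions and $-t^{2}+k(s+t)^{2}$ at non-residue positions when $p=4k+1$, and to the common value $rs+rt+k(s+t)^{2}+st$ at all off-diagonal positions when $p=4k+3$ — the equality of the residue and non-residue entries in the latter case being forced in advance by the symmetry relation $g(d)=g(-d)$ together with $-1\in N_{p}$. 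The main obstacle is precisely this cyclotomic bookkeeping: one must track which class $j$ and $j-d$ fall into across the four combinations and verify that the parities of the resulting counts match the stated coefficients. Alternatively, the whole statement follows by specializing the general circulant identity of \cite{gaborit} to $q=p$ prime and reducing its coefficients in characteristic $2$.
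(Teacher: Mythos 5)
The paper does not actually prove this theorem: it is quoted verbatim (Theorem 3.1 of \cite{gaborit}, specialized to $q=p$ prime and $\mathrm{char}(S)=2$), so your self-contained argument is necessarily a different route, and it is correct. Writing the $(a,c)$ entry of $Q_{p}(r,s,t)Q_{p}(r,s,t)^{T}$ as the autocorrelation $g(c-a)=\sum_{j}f(j)f(j-(c-a))$ of the symbol $f$, handling $g(0)$ and the boundary indices $j\in\{0,d\}$ by the parity of $\tfrac{p-1}{2}$ and the class of $-1$, and reducing the rest to cyclotomic numbers of order two is exactly the right decomposition, and the bookkeeping you defer with ``I expect'' does close. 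Concretely, with $(i,j)$ denoting the number of $x$ with $x\in C_{i}$ and $x+1\in C_{j}$ ($C_{0}=Q_{p}$, $C_{1}=N_{p}$), one has $(0,0)=k-1$ and $(0,1)=(1,0)=(1,1)=k$ for $p=4k+1$, and $(0,1)=k+1$ with the other three equal to $k$ for $p=4k+3$; the substitution $j\mapsto d(x+1)$ gives $A_{\varepsilon\delta}=(\delta,\varepsilon)$ for $d\in Q_{p}$ and $A_{\varepsilon\delta}=(1-\delta,1-\varepsilon)$ for $d\in N_{p}$, and summing modulo $2$ yields $(k+1)s^{2}+kt^{2}=-s^{2}+k(s+t)^{2}$ and $ks^{2}+(k+1)t^{2}=-t^{2}+k(s+t)^{2}$ when $p=4k+1$, and $ks^{2}+st+kt^{2}=k(s+t)^{2}+st$ plus the boundary contribution $r(s+t)=rs+rt$ when $p=4k+3$ --- exactly the claimed entries. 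What your approach buys is an elementary verification valid for any commutative $S$ of characteristic $2$ without importing Gaborit's more general identity over $\mathbb{F}_{q}$; what it costs is the case analysis on the residue class of $d$, which the citation avoids. In a final write-up you should replace ``I expect'' by the explicit table of the four counts and their parities, since that table is the entire content of the proof.
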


In order to define quadratic double circulant and bordered quadratic double
circulant codes over $R$ we introduce the following matrices;

\begin{eqnarray*}
C_{p}\left( r,s,t\right) &=&\left[
\begin{array}{c|c}
I_{p} & Q_{p}\left( r,s,t\right)%
\end{array}%
\right] \\
B_{p}\left( r,s,t,\lambda ,\beta ,\gamma \right) &=&\left[
\begin{array}{ccccc|cccc}
& \multicolumn{3}{c}{\multirow{4}{*}{$I_{p+1}$}} &  & \lambda & \beta &
\cdots & \beta \\
&  &  &  &  & \gamma & \multicolumn{3}{c}{\multirow{3}{*}{$Q_{p}\left(
r,s,t\right)$}} \\
&  &  &  &  & \vdots &  &  &  \\
&  &  &  &  & \gamma &  &  &
\end{array}%
\right]
\end{eqnarray*}

\begin{definition}
The code generated by $C_{p}\left( r,s,t\right) $ over $R$ is called
quadratic double circulant code and is denoted by $\mathcal{C}_{p}\left(
r,s,t\right) $. In a similar way, the code generated by $B_{p}\left(
r,s,t,\lambda ,\beta ,\gamma \right) $ over $R$ is called bordered quadratic
double circulant code and is denoted by $\mathcal{B}_{p}\left( r,s,t,\lambda
,\beta ,\gamma \right) $.
\end{definition}

\begin{theorem}
The codes $\mathcal{C}_{p}\left( 0,u^{2},1+u^{2}\right) $ and $\mathcal{C}%
_{p}\left( u+u^{2},1+u,u\right) $ are self dual codes when $p\equiv 3\pmod{8}
$.
\end{theorem}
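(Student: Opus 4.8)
The plan is to reduce the statement to a single matrix identity over $R$ and then verify it from the ring arithmetic. First I would observe that $C_p(r,s,t)=\left[\,I_p\mid Q_p(r,s,t)\,\right]$ generates a free $R$-module of rank $p$ inside $R^{2p}$, since the identity block forces the $p$ rows to be $R$-linearly independent; hence $\left|\mathcal{C}_p(r,s,t)\right|=8^{p}$. By Lemma \ref{dual} any self-dual code of length $2p$ over $R$ must have exactly $\left|C\right|=8^{p}$ elements, so the cardinality is automatically correct. Consequently it suffices to prove self-orthogonality, $\mathcal{C}_p\subseteq\mathcal{C}_p^{\perp}$: combined with $\left|\mathcal{C}_p\right|=8^{p}$ and $\left|\mathcal{C}_p\right|\cdot\left|\mathcal{C}_p^{\perp}\right|=8^{2p}$, this forces $\mathcal{C}_p=\mathcal{C}_p^{\perp}$.

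Next I would compute the Gram matrix. Because $R$ has characteristic $2$,
\begin{equation*}
C_p(r,s,t)\,C_p(r,s,t)^{T}=I_pI_p^{T}+Q_p(r,s,t)\,Q_p(r,s,t)^{T}=I_p+Q_p(r,s,t)\,Q_p(r,s,t)^{T},
\end{equation*}
so self-orthogonality is equivalent to $Q_p(r,s,t)\,Q_p(r,s,t)^{T}=I_p=Q_p(1,0,0)$. Here I would invoke Theorem \ref{qdc} in the case $p=4k+3$: since $p\equiv 3\pmod 8$ we may write $p=8m+3=4(2m)+3$, so $k=2m$ is \emph{even} and the term $k(s+t)^{2}$ vanishes in the characteristic-$2$ ring $R$. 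Theorem \ref{qdc} then collapses to
\begin{equation*}
Q_p(r,s,t)\,Q_p(r,s,t)^{T}=Q_p\!\left(r^{2}+s^{2}+t^{2},\; rs+rt+st,\; rs+rt+st\right),
\end{equation*}
and it remains only to check the two scalar conditions $r^{2}+s^{2}+t^{2}=1$ and $rs+rt+st=0$ for each parameter triple.

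The last step is the verification of these two identities, which is where the structure of $R$ enters through the squaring rules $(u^{2})^{2}=u^{2}$, $(1+u^{2})^{2}=1+u^{2}$, $(1+u)^{2}=1+u^{2}$ and $(u+u^{2})^{2}=0$, together with $u^{3}=u$. For $(r,s,t)=(0,u^{2},1+u^{2})$ one gets $r^{2}+s^{2}+t^{2}=u^{2}+(1+u^{2})=1$ and $rs+rt+st=u^{2}(1+u^{2})=u^{2}+u^{2}=0$; for $(r,s,t)=(u+u^{2},1+u,u)$ one gets $r^{2}+s^{2}+t^{2}=0+(1+u^{2})+u^{2}=1$ and, after expanding and reducing via $u^{3}=u$, $rs+rt+st=0$. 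In both cases $Q_pQ_p^{T}=Q_p(1,0,0)=I_p$, whence $C_pC_p^{T}=I_p+I_p=0$ and the codes are self-dual. I do not anticipate a real obstacle: the only points demanding care are recognizing that $k$ is even so that the $k(s+t)^{2}$ term drops out, and carrying out the products correctly under $u^{3}=u$; everything else is dictated by the identity block and Theorem \ref{qdc}.
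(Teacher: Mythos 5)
Your proof is correct and follows essentially the same route as the paper's: both reduce self-duality to the single identity $Q_p(r,s,t)\,Q_p(r,s,t)^{T}=I_p$ via Theorem \ref{qdc} in the case $p=4k+3$ and then verify the resulting scalar conditions in $R$. You are in fact slightly more explicit than the paper, which silently drops the $k(s+t)^{2}$ term (justified, as you observe, because $k$ is even when $p\equiv 3\pmod 8$) and leaves the cardinality step from self-orthogonality to self-duality implicit.
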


\begin{proof}
If $p\equiv 8k+3$ then by theorem \ref{qdc} $Q_{p}\left(
0,u^{2},1+u^{2}\right) Q_{p}\left( 0,u^{2},1+u^{2}\right) ^{T}=Q_{p}\left(
\left( u^{2}\right) ^{2}+\left( 1+u^{2}\right) ^{2},u^{2}\left(
1+u^{2}\right) ,u^{2}\left( 1+u^{2}\right) \right) =Q_{p}\left( 1,0,0\right)
=I_{p}$. Hence, the code $\mathcal{C}_{p}\left( 0,u^{2},1+u^{2}\right) $ is
a self dual code. Similarly for $\mathcal{C}_{p}\left( u+u^{2},1+u,u\right) $%
.
\end{proof}

Below, we list some good examples of this type:

\begin{center}
Table 3: Some self-dual double circulant codes

\bigskip

\begin{tabular}{|l|l|l|l|}
\hline
& The code over $R$ & binary Gray image & $\left\vert Aut\left( C\right)
\right\vert $ \\ \hline
$\mathcal{C}_{3}\left( 0,u^{2},1+u^{2}\right) $ & $\left( 6,8^{3},4\right) $
& $[18,9,4]_{2}$ &  \\ \hline
$\mathcal{C}_{3}\left( u+u^{2},1+u,u\right) $ & $\left( 6,8^{3},4\right) $ &
$[18,9,4]_{2}$ &  \\ \hline
$\mathcal{C}_{11}\left( 0,u^{2},1+u^{2}\right) $ & $\left(
22,8^{11},12\right) $ & $[66,33,12]_{2},$ $\alpha =66$ in $W_{66,1}$ & $660$
\\ \hline
$\mathcal{C}_{11}\left( u+u^{2},1+u,u\right) $ & $\left( 22,8^{11},12\right)
$ & $[66,33,12]_{2},$ $\alpha =22$ in $W_{66,1}$ & $220$ \\ \hline
$\mathcal{C}_{19}\left( 0,u^{2},1+u^{2}\right) $ & $\left(
38,8^{19},16\right) $ & $[114,57,16]_{2}$ &  \\ \hline
$\mathcal{C}_{19}\left( u+u^{2},1+u,u\right) $ & $\left( 38,8^{19},16\right)
$ & $[114,57,16]_{2}$ &  \\ \hline
\end{tabular}
\end{center}

Similar to the cases above, we may observe that $\mathcal{B}_{p}\left( r,s,t,\lambda ,1,1\right) $ is a
self dual code if $Q_{p}\left( r,s,t\right) Q_{p}\left( r,s,t\right)
^{T}=Q_{p}\left( 0,1,1\right) $ and the sum of the elements in a row of the
circulant matrix is $\lambda $ which satisfies $\lambda ^{2}=0$.

Some examples falling into this family are given below:

\begin{center}
Table 4: Some self-dual bordered double circulant codes

\bigskip

\begin{tabular}{|l|l|l|}
\hline
& binary Gray image & $\left\vert Aut\left( C\right) \right\vert $ \\ \hline
$\mathcal{B}_{11}\left( 1,u^{2},1+u^{2},0,1,1\right) $ & $[72,36,12]_{2},$ $%
\alpha =-3600$ in $W_{72}$ & $7920$ \\ \hline
$\mathcal{B}_{11}\left( u^{2},1,1+u^{2},0,1,1\right) $ & $[72,36,12]_{2},$ $%
\alpha =-1356$ in $W_{72}$ & $79200$ \\ \hline
$\mathcal{B}_{11}\left( u^{2},1,1+u,u+u^{2},1,1\right) $ & $[72,36,12]_{2}$
\ Type I & $440$ \\ \hline
$\mathcal{B}_{19}\left( 1,u^{2},1+u^{2},0,1,1\right) $ & $[120,60,16]_{2}$ &
\\ \hline
$\mathcal{B}_{19}\left( 1,u^{2},1+u,u+u^{2},1,1\right) $ & $[120,60,14]_{2}$
&  \\ \hline
\end{tabular}
\end{center}

The Gray image of the code $\mathcal{B}_{11}\left(
u^{2},1,1+u^{2},0,1,1\right) $ is the first $[72,36,12]_{2}~$Type II code
with a weight enumerator that has $\alpha =-1356$ in $W_{72}$, the binary generator
matrix is available online in \cite{kaya}. A code with $\alpha =-3600$ and $%
\left\vert Aut\right\vert =72$ is constucted in \cite{gulliver}, the code we
constructed with the same weight enumerator as the Gray image of $\mathcal{B}%
_{11}\left( 1,u^{2},1+u^{2},0,1,1\right) $ has an automorphism group of size
$7920$ which implies it is a new code.

\section{Extensions}

Some extension methods for self-dual codes are given and applied to some of
the codes in the previous section. In particular, we obtain $32$ new
extremal self-dual binary codes of length $68$, $363$ new Type I $\left[
72,36,12\right] _{2}$ codes, codes with these weight enumerators were not
known to exist previously.

In the sequel, let $S$ be a commutative ring of characteristic $2$ with
identity.

\begin{theorem}
\label{ext}Let $C$ be a self-dual code over $S$ of length $n$ and $G=(r_{i})$
be a $k\times n$ generator matrix for $C$, where $r_{i}$ is the $i$-th row
of $G$, $1\leq i\leq k$. Let $c$ be a unit in $S$ such that $c^{2}=1$ and $X$
be a vector in $R^{n}$ with $\left\langle X,X\right\rangle =1$. Let $%
y_{i}=\left\langle r_{i},X\right\rangle $ for $1\leq i\leq k$. Then the
following matrix%
\begin{equation*}
\left[
\begin{array}{ccc}
1 & 0 & X \\
y_{1} & cy_{1} & r_{1} \\
\vdots & \vdots & \vdots \\
y_{k} & cy_{k} & r_{k}%
\end{array}%
\right],
\end{equation*}
generates a self-dual code $D$ over $S$ of length $n+2$.
\end{theorem}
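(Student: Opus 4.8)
The plan is to verify directly that the rows of the extended matrix are pairwise orthogonal and that the resulting code has the correct size to be self-dual. Let me denote the new matrix by $G'$, with top row $R_0 = (1,0,X)$ and lower rows $R_i = (y_i, cy_i, r_i)$ for $1 \le i \le k$. Since $D$ is generated by $G'$, which has $k+1$ rows spanning a length-$(n+2)$ code, and since $|C| = |S|^k$ gives $k = \tfrac{n}{2} \cdot \tfrac{\log|C|}{\log|S|}$ rows with $C$ self-dual of length $n$ (so effectively $2(k+1) = n+2$ in the appropriate module-theoretic sense), the code $D$ will be self-dual provided it is self-orthogonal and its size matches $|S|^{(n+2)/2}$. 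The cleanest route is to establish self-orthogonality, i.e. $G'(G')^T = 0$, and then invoke a size count analogous to Lemma \ref{dual}.

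First I would compute the three types of inner products among the rows. For the top row with itself, $\langle R_0, R_0\rangle = 1\cdot 1 + 0\cdot 0 + \langle X, X\rangle = 1 + \langle X,X\rangle$; since $\mathrm{char}(S)=2$ and $\langle X,X\rangle = 1$ by hypothesis, this equals $1+1 = 0$. For the top row with a lower row $R_i$, I compute $\langle R_0, R_i\rangle = 1\cdot y_i + 0\cdot(cy_i) + \langle X, r_i\rangle = y_i + \langle r_i, X\rangle = y_i + y_i = 0$, using the definition $y_i = \langle r_i, X\rangle$ and characteristic $2$. For two lower rows $R_i, R_j$, I compute $\langle R_i, R_j\rangle = y_i y_j + c^2 y_i y_j + \langle r_i, r_j\rangle = (1 + c^2)y_i y_j + \langle r_i, r_j\rangle$; since $c^2 = 1$ this is $(1+1)y_i y_j + \langle r_i, r_j\rangle = \langle r_i, r_j\rangle$, which vanishes because $C$ is self-dual and hence self-orthogonal, so $\langle r_i, r_j\rangle = 0$ for all $i,j$ (including $i=j$). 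This shows $D$ is self-orthogonal.

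Having established $D \subseteq D^\perp$, the remaining step is the cardinality count. The key observation is that the map sending a codeword of $C$ (a combination $\sum a_i r_i$) to the corresponding combination $\sum a_i R_i$ together with multiples of $R_0$ is injective on the generating set: the final $n$ coordinates of $G'$ restricted to the lower rows reproduce $G$, and the first coordinate pair $(1,0)$ of $R_0$ is independent of the lower rows whose first two entries are $(y_i, cy_i)$. Thus $D$ contains $|S|$ times as many codewords as $C$, giving $|D| = |S|\cdot|C| = |S|\cdot|S|^{n/2} = |S|^{(n+2)/2}$. Combined with self-orthogonality and Lemma \ref{dual} applied over $S$ (which forces $|D|\cdot|D^\perp| = |S|^{n+2}$), we get $|D| = |D^\perp|$, and together with $D \subseteq D^\perp$ this yields $D = D^\perp$, i.e. $D$ is self-dual of length $n+2$.

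I expect the main subtlety to lie not in the orthogonality computations, which are routine given $\mathrm{char}(S)=2$ and $c^2=1$, but in the cardinality argument: one must argue carefully that adjoining the row $R_0$ genuinely multiplies the code size by exactly $|S|$ and does not collapse into the span of the lower rows. The cleanest way to see this is that the added row $R_0$ has a nonzero entry in the first coordinate while all lower rows $R_i$ have first-coordinate entries $y_i$ that can be cleared by row operations using $R_0$ — but one should check this works over the ring $S$ (not just a field), which is where the hypothesis that $c$ is a unit with $c^2=1$ and the Frobenius structure underpinning Lemma \ref{dual} become essential. I would therefore phrase the size count via the exact relation $|D|\cdot|D^\perp| = |S|^{n+2}$ rather than appealing to linear independence, since this sidesteps any rank subtleties over a non-field ring.
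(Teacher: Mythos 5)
Your argument is correct and complete; note that the paper actually states Theorem \ref{ext} without any proof, so there is no in-paper argument to compare against --- your write-up supplies the missing verification. The orthogonality computations are exactly what is needed: $\langle R_0,R_0\rangle=1+\langle X,X\rangle=0$, $\langle R_0,R_i\rangle=y_i+\langle X,r_i\rangle=2y_i=0$, and $\langle R_i,R_j\rangle=(1+c^2)y_iy_j+\langle r_i,r_j\rangle=0$, all using $\mathrm{char}(S)=2$, $c^2=1$, and the self-orthogonality of $C$. Your decision to route the final step through the Frobenius counting identity $|D|\cdot|D^\perp|=|S|^{n+2}$ rather than through rank or linear independence is the right instinct over a non-field ring. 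The only place to tighten is the claim $|D|=|S|\cdot|C|$: the observation that $R_0$ has a unit in its first coordinate is not by itself decisive, since a combination of the lower rows can also have a unit there (any $\sum a_iy_i$ can be a unit). The clean argument is the one you gesture at with the pair $(1,0)$ versus $(y_i,cy_i)$: every element of the span $D_0$ of the lower rows has first two coordinates of the form $(t,ct)$ with $t=\langle w,X\rangle$ for the underlying $w\in C$, whereas $aR_0$ has first two coordinates $(a,0)$; equating forces $ca=0$, hence $a=0$ because $c$ is a unit, so $SR_0\cap D_0=\{0\}$ and $|D|=|SR_0|\cdot|D_0|=|S|\cdot|C|=|S|^{(n+2)/2}$ (the identification $|D_0|=|C|$ via projection onto the last $n$ coordinates is well defined precisely because $\sum a_iy_i=\langle\sum a_ir_i,X\rangle$ depends only on the codeword). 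With that intersection argument made explicit, the proof is airtight.
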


A quick search for the possible $R$-extensions of the codes $\mathcal{C}%
_{11}\left( 0,u^{2},1+u^{2}\right) $ and $\mathcal{C}_{11}\left(
u+u^{2},1+u,u\right) $ gave $16$ new $\left[ 72,36,12\right] _{2}$ codes
with known weight enumerators. In order to save space we do not list the
corresponding $\alpha $-value, $X$ and $c$ which are all available online in
\cite{kaya}.

A more specific extension method which can easily be applied to some double
circulant codes may be given as follows:

\begin{theorem}
\label{idext}Let $C$ be a self-dual code generated by $G=\left[ I_{n}|A%
\right] $ over $S$. If the sum of the elements in any row of $A$ is the unit
$v$ then the matrix:%
\begin{equation*}
G^{\ast }=\left[
\begin{array}{cc|cccccc}
1 & 0 & x_{1} & \ldots & x_{n} & v^{-1} & \ldots & v^{-1} \\ \hline
y_{1} & cy_{1} & \multicolumn{3}{c}{} & \multicolumn{3}{c}{} \\
\vdots & \vdots &  & I_{n} &  &  & A &  \\
y_{n} & cy_{n} &  &  &  &  &  &
\end{array}%
\right],
\end{equation*}
where $y_{i}=x_{i}+1$, $c$ is a unit with $c^{2}=1$, $\left\langle
X,X\right\rangle =1+nv^{-2}$ and $X=\left( x_{1},\ldots ,x_{n}\right) $,
generates a self-dual code $C^{\ast }$ over $S$.
\end{theorem}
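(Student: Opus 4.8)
The plan is to deduce Theorem \ref{idext} as a direct specialization of Theorem \ref{ext}, so that no orthogonality or counting argument has to be repeated. The underlying code $C$ is self-dual of length $2n$ with generator $G=[I_{n}\mid A]$; I will write its $i$-th row as $r_{i}=(e_{i}\mid a_{i})$, where $e_{i}$ is the $i$-th standard basis vector and $a_{i}$ the $i$-th row of $A$. I would then invoke Theorem \ref{ext} for this $C$, using the same unit $c$ and the extension vector
\[
\widetilde{X}=(x_{1},\ldots ,x_{n},v^{-1},\ldots ,v^{-1})\in S^{2n},
\]
namely $X$ placed in the $I_{n}$-coordinates followed by $n$ copies of $v^{-1}$ in the $A$-coordinates.

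Two short verifications then make Theorem \ref{ext} applicable. First, its hypothesis $\langle \widetilde{X},\widetilde{X}\rangle =1$ holds because $\langle \widetilde{X},\widetilde{X}\rangle =\langle X,X\rangle +n v^{-2}$, and the assumption $\langle X,X\rangle =1+n v^{-2}$ gives $\langle \widetilde{X},\widetilde{X}\rangle =1+2n v^{-2}=1$ in characteristic $2$. Second, the border entries $y_{i}=\langle r_{i},\widetilde{X}\rangle$ prescribed by Theorem \ref{ext} come out to be exactly $x_{i}+1$: indeed $\langle r_{i},\widetilde{X}\rangle =\langle e_{i},X\rangle +v^{-1}\sum_{j}(a_{i})_{j}=x_{i}+v^{-1}v=x_{i}+1$, where the middle equality is precisely the hypothesis that every row of $A$ sums to $v$. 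With these two facts, the matrix that Theorem \ref{ext} outputs,
\[
\begin{pmatrix} 1 & 0 & \widetilde{X} \\ y_{i} & c\,y_{i} & r_{i} \end{pmatrix},
\]
is literally $G^{\ast}$ once the blocks $\widetilde{X}$ and $r_{i}$ are written out coordinatewise, and Theorem \ref{ext} then certifies that $C^{\ast}$ is self-dual.

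The step I expect to require the most care is the coordinate bookkeeping that aligns the appended block $(v^{-1},\ldots ,v^{-1})$ with the $A$-columns: it is exactly this choice, combined with the row-sum condition, that converts $\langle r_{i},\widetilde{X}\rangle$ into $x_{i}+1$ and reproduces the border displayed in $G^{\ast}$. If one instead wanted a self-contained proof, I would verify pairwise orthogonality of the rows of $G^{\ast}$ by hand---the cross term from the first two columns vanishes as $y_{i}y_{j}(1+c^{2})=0$ since $c^{2}=1$, the product of the top row with a lower row collapses through the row-sum condition, and the top row has zero norm by the assumption on $\langle X,X\rangle$---and then fix the size $|C^{\ast}|=|S|^{n+1}$ using Lemma \ref{dual}; but routing everything through Theorem \ref{ext} is cleaner and inherits that counting step for free.
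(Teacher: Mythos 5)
Your proposal is correct: the paper states Theorem \ref{idext} without any proof, presenting it only as ``a more specific extension method,'' and your derivation of it as the specialization of Theorem \ref{ext} to the extension vector $\widetilde{X}=(x_{1},\ldots,x_{n},v^{-1},\ldots,v^{-1})$ is exactly the intended reduction. Both of your hypothesis checks are right --- $\langle\widetilde{X},\widetilde{X}\rangle=\langle X,X\rangle+nv^{-2}=1+2nv^{-2}=1$ in characteristic $2$, and $y_{i}=\langle r_{i},\widetilde{X}\rangle=x_{i}+v^{-1}v=x_{i}+1$ via the row-sum condition --- so nothing further is needed.
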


\subsection{New Type I $\left[ 72,36,12\right] _{2}$ codes}

The existence of an extremal Type I $\left[ 72,36,14\right] _{2}$ code is
unknown. It is known that the non-existence of this code implies the
non-existence of the putative Type II $\left[ 72,36,16\right] _{2}$ code. So
far the best known distance for a Type I code of length $72$ is $12$ and few
such codes are known. See \cite{dougherty2} for some of them.

The possible weight enumerators for a Type I $\left[ 72,36,12\right] _{2}$
code are as follows;%
\begin{eqnarray*}
W_{72,1} &=&1+2\beta y^{12}+\left( 8640-64\gamma \right) y^{14}+\left(
124281-24\beta +384\gamma \right) y^{16}+\cdots \\
W_{72,2} &=&1+2\beta y^{12}+\left( 7616-64\gamma \right) y^{14}+\left(
134521-24\beta +384\gamma \right) y^{16}+\cdots
\end{eqnarray*}%
where $\beta $ and $\gamma $ are parameters. Observe that the three possible
weight enumerators for a $\left[ 72,36,14\right] _{2}$ code can be obtained
as $\beta =0=\gamma $ in $W_{72,2}$ and $\beta =0$ and $\gamma =0$, $1$ in $%
W_{72,1}$.

\begin{example}
The Type I code $\mathcal{B}_{11}\left( u^{2},1,1+u,u+u^{2},1,1\right) $ in
the previous section has weight enumerator $\gamma =11$ and $\beta =859$ in $%
W_{72,2}$.
\end{example}

\begin{example}
When we apply the extension in Theorem \ref{idext} to $\mathcal{C}%
_{11}\left( 0,u^{2},1+u^{2}\right) $ with $X=\left(
u^{2},0,u^{2},0,u^{2},u^{2},0,0,u+u^{2},u,u\right) $ and $c=1$, the Gray
image of the extension is a code with weight enumerator $\gamma =0$ and $%
\beta =335$ in $W_{72,2}$.
\end{example}

In a similar way codes with $\gamma =0$ and $\beta =209,263,309,317$ are
obtained from extensions of codes $\mathcal{C}_{11}\left(
u+u^{2},1+u,u\right) $ and $\mathcal{C}_{11}\left( 0,u^{2},1+u^{2}\right) $,
details are available in \cite{kaya}.

By considering the possible extensions of $BSQR\left( 23\right) $ with
respect to Theorem \ref{ext} we obtain $134$ self-dual codes of length $72$
with new weight enumerators in $W_{72,1}$. To be precise, the codes with $%
\gamma =0$ and $\beta =$523,...,575,577, 579, 580, $\gamma =1$ and $\beta =$%
525,\ 526,\ 527,\ 532,\ 533,\ 534,\ 539,...,577, 579,\ 580,\ 581, $\gamma =2$
and $\beta =$527,\ 538,\ 542,\ 549,\ 552,\ 555,\ 560,\ 562,\ 564,\ 565,\
566,\ 568,\ldots ,\ 573,\ 575,\ 576,\ 580,\ 584,\ 585$,$ $\gamma =3$ and $%
\beta =$548, 552, 558, 562, 568, 581, 582 and a code with $\gamma =4$ and $%
\beta =581$. The codes are available in \cite{kaya}.

The extension in Theorem \ref{idext} is applied to $\mathcal{C}_{11}\left(
0,u^{2},1+u^{2}\right) $ with $X$ and $c$, the Gray images of these codes
are self-dual codes of length $72$. Among them we single out Type I codes
with minimum distance $12$ and obtain $61$ different codes with weight
enumerators in $W_{72,1}$, here we list some of them;

\bigskip

\begin{center}
Table 5: Type I $\left[ 72,36,12\right] _{2}$ codes from $\mathcal{C}%
_{11}\left( 0,u^{2},1+u^{2}\right) $

\begin{tabular}{|l|l|l|l|}
\hline
$X$ & $c$ & $\gamma $ & $\beta $ \\ \hline
$\left( 1+u^{2},u,u,u,1+u^{2},u,0,1+u^{2},u,1+u^{2},u^{2}\right) $ & $1$ & $%
5 $ & $269$ \\ \hline
$\left( u+u^{2},0,u,1+u,u+u^{2},1,u^{2},1+u+u^{2},u,1+u^{2},u\right) $ & $1$
& $5$ & $273$ \\ \hline
$\left( u,1+u+u^{2},u^{2},u+u^{2},1+u,u^{2},1+u^{2},u+u^{2},0,1,u^{2}\right)
$ & $1$ & $5$ & $235$ \\ \hline
$\left( 1,1+u+u^{2},1+u,0,u,u^{2},u,0,u,u+u^{2},1+u\right) $ & $1+u+u^{2}$ &
$5$ & $255$ \\ \hline
$\left( u^{2},1+u^{2},u^{2},u+u^{2},u,u,1+u,1+u,1+u^{2},u^{2},u\right) $ & $%
1+u+u^{2}$ & $4$ & $263$ \\ \hline
$\left( 0,1,1,u^{2},u,1+u+u^{2},u+u^{2},1+u^{2},u,1+u^{2},1+u^{2}\right) $ &
$1+u+u^{2}$ & $3$ & $250$ \\ \hline
$\left( u^{2},0,1,u+u^{2},0,1,1+u^{2},1+u^{2},1+u,1,0\right) $ & $1$ & $3$ &
$258$ \\ \hline
$\left( 1+u^{2},1,0,u+u^{2},u,0,1+u+u^{2},1+u,u^{2},u^{2},u\right) $ & $%
1+u+u^{2}$ & $2$ & $279$ \\ \hline
$\left( 0,u,0,1+u^{2},1,1+u^{2},1+u^{2},u,1+u+u^{2},1,u\right) $ & $1$ & $1$
& $256$ \\ \hline
$\left( u,u^{2},u,1,1+u^{2},1,u+u^{2},0,1,1+u,1+u^{2}\right) $ & $1$ & $0$ &
$258$ \\ \hline
\end{tabular}
\end{center}

Same method is applied to $\mathcal{C}_{11}\left( u+u^{2},1+u,u\right) $ and
codes with $47$ distinct weight enumerators are obtained. Some of them are:

\bigskip

\begin{center}
Table 6: Type I $\left[ 72,36,12\right] _{2}$ codes from $\mathcal{C}%
_{11}\left( u+u^{2},1+u,u\right) $

\begin{tabular}{|l|l|l|l|}
\hline
$X$ & $c$ & $\gamma $ & $\beta $ \\ \hline
$\left( 1+u,u^{2},1+u^{2},u+u^{2},u^{2},u^{2},u^{2},u,u,1+u^{2},1+u\right) $
& $1+u+u^{2}$ & $4$ & $231$ \\ \hline
$\left( u^{2},u^{2},u,u^{2},u^{2},1+u^{2},1+u,1+u,u,1+u,0\right) $ & $%
1+u+u^{2}$ & $4$ & $249$ \\ \hline
$\left( 1+u,1+u,1+u+u^{2},1,0,0,u,1+u,u^{2},u^{2},1\right) $ & $1+u+u^{2}$ &
$3$ & $196$ \\ \hline
$\left(
u^{2},u^{2},1+u^{2},1+u^{2},1+u^{2},u^{2},u^{2},1+u^{2},0,u,u,u\right) $ & $%
1+u+u^{2}$ & $3$ & $215$ \\ \hline
$\left( 0,u^{2},1+u,1+u+u^{2},1+u+u^{2},1,0,1+u^{2},0,u,1\right) $ & $1$ & $%
2 $ & $241$ \\ \hline
$\left( u,1+u,1+u,1,u^{2},1+u^{2},0,1,1+u^{2},1,1+u+u^{2}\right) $ & $1$ & $%
2 $ & $244$ \\ \hline
$\left( u+u^{2},1,u+u^{2},1+u,u,u^{2},0,u+u^{2},0,1+u+u^{2},1+u\right) $ & $%
1+u+u^{2}$ & $2$ & $233$ \\ \hline
$\left( 1+u,0,0,1,1+u,0,1+u+u^{2},1,u,1+u+u^{2},u^{2}\right) $ & $1+u+u^{2}$
& $1$ & $211$ \\ \hline
$\left( 1,1,u^{2},1+u,u,1+u^{2},1,1+u^{2},0,u^{2},0\right) $ & $1$ & $1$ & $%
232$ \\ \hline
$\left( u,u,1,1+u,1,1+u^{2},1+u+u^{2},0,0,1+u+u^{2},u+u^{2}\right) $ & $%
1+u+u^{2}$ & $0$ & $211$ \\ \hline
\end{tabular}
\end{center}

\bigskip

In a similar way, as an application of Theorem \ref{ext}, 74 and 41 new codes
are obtained respectively from $\mathcal{C}_{11}\left(
0,u^{2},1+u^{2}\right) $ and $\mathcal{C}_{11}\left( u+u^{2},1+u,u\right) $.
For the codes which are not listed here the necessary information is available
online in \cite{kaya}. Hence $223$ codes in $W_{72,1}$ are obtained which
have new weight enumerators as; $\gamma =9$ and $\beta =$311, $\gamma =8$
and $\beta =$277, 291, $\gamma =7$ and $\beta =$262, 278, 280, 287, 296 , $%
\gamma =6$ and $\beta =$253, 255, 261, 263, 267, 275, 283, 285, 305, $\gamma
=5$ and $\beta =$228, 229, 231, 234, 235, 236, 242, 249, 255, 259, 265, 266,
269, 273, .., 278, 281, 283, 285, 286, 288, $\gamma =4$ and $\beta =$ 229,
231, 245, 249, 253, 259,\ 263, 264, 266, 273, 275, 279, 287, 292, $\gamma =3$
and $\beta =$196, 210, 215, 217, 218, 219, 231, 236, 238, 241, 244, 245,
248, 250, 251, 252, 254, 256, 258, 260, 261, 262, 266, 267, 268, 270, 272,
273, 276, 280, 284, 294, 297, $\gamma =2$ and $\beta =$195, 199, 201, 218,
219, 222, 223, 228, 231, .., 233, 239, 240, 241 243, 244, 245, 250, 251,
255, 257, 261, 262, 264, 266, 267, 268, 276, 278, 279, 285, $\gamma =1$ and $%
\beta =$193, 195, 199, 200, 206, 207, 208, 211, 212, 213, 215, 216, 217,
219, 220, 222, 223, 225, 226, 227, 229, 232, .., 240, 242, 243, 244, 246,
247, 248, 249, 250, 252, 254, 256, 257, 258, 260, 261, 264, 266, 270, 274,
276, 277, $\gamma =0$ and $\beta =$ 185, 196, 200, 203, 205, ..., 218, 220,
221, 222, 226, 227, 228, 230, 231, 232, 233, 234, 235, 237, 238, 239, 242,
.., 249, 251, 254, 257, 258, 261, 262, 264, 265, 267, 273, 275, 279.

\subsection{New binary extremal codes of length $68$}

There are two possibilities for the weight enumerators of extremal self-dual
$\left[ 68,34,12\right] _{2}$ codes (\cite{dougherty1}):
\begin{eqnarray*}
W_{68,1} &=&1+\left( 442+4\beta \right) y^{12}+\left( 10864-8\beta \right)
y^{14}+\cdots , \\
W_{68,2} &=&1+\left( 442+4\beta \right) y^{12}+\left( 14960-8\beta
-256\gamma \right) y^{14}+\cdots
\end{eqnarray*}%
where $\beta $ and $\gamma $ are parameters. Tsai et al. constructed a
substantial number of codes in both possible weight enumerators in \cite%
{tsai}. Most recently, $28$ new codes including the first examples with $%
\gamma =4$ and $\gamma =6$ in $W_{68,2}$ are obtained in \cite{karadeniz}.
For the list of codes with $\gamma =4$ and $\gamma =6$ in $W_{68,2}$ we
refer to \cite{karadeniz}. Together with the ones in \cite{karadeniz} codes
exists for $W_{68,2}$ when $\gamma =0$ and $\beta =$38, 40, 44, 45,
47,...,65,67,...,110,130, 132, 136, 138, 170, 204, 238, 272 or $\beta \in
\left\{ 2m|56\leq m\leq 62\right\} $; $\gamma =1$ and $\beta =$61, 63, 64,
65, 72, 73, 76, 82,\ldots , 115; and $\gamma =2$ with $\beta =$65, 71, 77,
86, 88, 93, 94, 96, 99, 109, 123, 130, 132, 134, 140, 142, 146, 152 or $%
\beta \in \left\{ 2m|51\leq m\leq 63\right\} $. For a list of known codes in
$W_{68,1}$ we refer to \cite{tsai}.

In the following we apply the extension method in Theorem \ref{ext} to the
binary images of the QDC codes $\mathcal{C}_{11}\left(
0,u^{2},1+u^{2}\right) $ and $\mathcal{C}_{11}\left( u+u^{2},1+u,u\right) $
and obtain $32$ new extremal self dual codes in $W_{68,2}$, codes with these
weight enumerators were not known to exist previously. In the following
tables, $C_{i}$ is the binary code generated by
\begin{equation*}
\left[
\begin{array}{cc|ccc}
1 & 0 &  & X &  \\ \hline
y_{1} & y_{1} & \multicolumn{3}{c}{} \\
\vdots  & \vdots  &  & G &  \\
y_{33} & y_{33} &  &  &
\end{array}%
\right]
\end{equation*}%
where $y_{i}=\left\langle G_{i},X\right\rangle $ for $1\leq i\leq 33$, $G$
is the matrix $\varphi \left( C_{11}\left( 0,u^{2},1+u^{2}\right) \right) $
and $\varphi \left( C_{11}\left( u+u^{2},1+u,u\right) \right) $ respectively
for tables 7 and 8. In order to save space the necessary vectors for
extensions are given in hexadecimal form, the binary vectors are available
online in \cite{kaya}.

\begin{center}
\bigskip

Table 7: New extremal self dual $\left[ 68,34,12\right] _{2}$ codes from $%
\mathcal{C}_{11}\left( 0,u^{2},1+u^{2}\right) $

\begin{tabular}{|l|l|l|l|l|}
\hline
& $X$ (hexadecimal) & $\gamma $ & $\beta $ & $\left\vert Aut\right\vert $ \\
\hline
$C_{1}$ & 1366E7855836D5F97 & $0$ & $111$ & $1$ \\ \hline
$C_{2}$ & 152C8FDA100E589E4 & $0$ & $113$ & $1$ \\ \hline
$C_{3}$ & 307C91A5CC0BEFB39 & $0$ & $115$ & $1$ \\ \hline
$C_{4}$ & 2FBF977F66C73C095 & $0$ & $117$ & $1$ \\ \hline
$C_{5}$ & 2DBBF3D2D8C219910 & $0$ & $119$ & $1$ \\ \hline
$C_{6}$ & 252951E0B1E5AAC21 & $0$ & $121$ & $1$ \\ \hline
$C_{7}$ & EDA2BBD6B53937A4 & $0$ & $123$ & $1$ \\ \hline
$C_{8}$ & 4528892715B1C268 & $0$ & $125$ & $1$ \\ \hline
$C_{9}$ & D989EFC395464C6F & $0$ & $126$ & $1$ \\ \hline
$C_{10}$ & 42E4E15D93AE3075 & $0$ & $127$ & $1$ \\ \hline
$C_{11}$ & DC2E97A7B77B9378 & $0$ & $128$ & $1$ \\ \hline
$C_{12}$ & 20C589DC55E710589 & $0$ & $129$ & $1$ \\ \hline
$C_{13}$ & 22C125C827448086F & $0$ & $131$ & $1$ \\ \hline
$C_{14}$ & 231CC8E70F78AE4F0 & $0$ & $133$ & $1$ \\ \hline
$C_{15}$ & 32BC23AA33E36B123 & $0$ & $134$ & $1$ \\ \hline
$C_{16}$ & 26745142F8B420C86 & $0$ & $135$ & $2$ \\ \hline
$C_{17}$ & 38C21CF4AF47A41E3 & $0$ & $139$ & $2$ \\ \hline
$C_{18}$ & 384F6537649B8B0AA & $1$ & $118$ & $1$ \\ \hline
$C_{19}$ & 6353300D871453E1 & $1$ & $126$ & $1$ \\ \hline
$C_{20}$ & CE66C92ABB5EE18E & $1$ & $129$ & $1$ \\ \hline
$C_{21}$ & 739A837C7816DDCE & $1$ & $132$ & $1$ \\ \hline
$C_{22}$ & 190A5C0A051314F9B & $1$ & $133$ & $1$ \\ \hline
$C_{23}$ & 25F97FDA3C7DD9F16 & $1$ & $138$ & $1$ \\ \hline
$C_{24}$ & 3DB29DEB3DFDA30C1 & $1$ & $140$ & $2$ \\ \hline
$C_{25}$ & 3BFBD24B7741E669F & $1$ & $142$ & $1$ \\ \hline
$C_{26}$ & 18DAFB91A9516B39 & $1$ & $146$ & $1$ \\ \hline
\end{tabular}

\bigskip

Table 8: New extremal self dual $\left[ 68,34,12\right] _{2}$ codes from $%
\mathcal{C}_{11}\left( u+u^{2},1+u,u\right) $

\begin{tabular}{|l|l|l|l|l|}
\hline
& $X$ (hexadecimal) & $\gamma $ & $\beta $ & $\left\vert Aut\right\vert $ \\
\hline
$C_{27}$ & E2A99BBA87FEF283 & $0$ & $66$ & $1$ \\ \hline
$C_{28}$ & 289CF22D186686C0E & $1$ & $77$ & $2$ \\ \hline
$C_{29}$ & 14AD41A72715F3696 & $1$ & $79$ & $2$ \\ \hline
$C_{30}$ & 2C8C98C94932D7341 & $1$ & $81$ & $1$ \\ \hline
$C_{31}$ & 3D07A44D2980F9E8C & $2$ & $82$ & $1$ \\ \hline
$C_{32}$ & 3E26AD3A8670694F8 & $2$ & $84$ & $2$ \\ \hline
\end{tabular}

\bigskip
\end{center}

In addition to these codes we were able to find codes in $W_{68,2}$ with
automorphism group or order $2$ with $\gamma =0$ and $\beta =$66, 113, 117,
119, 121, 123, 125, 126, 127, 128, 129, 133, 134 and codes which have
automorphism group of order $4$ with $\gamma =0$, $\beta =128$ and $\gamma
=1 $, $\beta =146$ . We do not list these $15$ codes here, they are
available online at \cite{kaya}.

\section{Conclusion}

Quadratic residue codes have been of interest to the coding theory community
because of their algebraic structures and their potential to construct good
codes. As illustrated by their role in constructing the extremal $[48,24,12]$
Type II code, they can also be of help in constructing self-dual codes. We
considered quadratic residue codes over a specific Frobenius ring that is
endowed with a duality and distance preserving Gray map. Using different
constructions for self-dual codes over $R$ we were able to obtain many new
extremal binary self-dual codes as Gray images. Because of the automorphisms
resulting from the ring structure as well as the quadratic residue
structure, our constructions have high potential to fill the gaps in the
literature on self-dual codes.

As a possible line of research, different rings can be considered for
similar constructions.

\end{document}